\providecommand{\tabularnewline}{\\}
 \theoremstyle{definition}
 \newtheorem*{defn*}{\protect\definitionname}
  \theoremstyle{remark}
  \newtheorem*{rem*}{\protect\remarkname}
\theoremstyle{plain}
\newtheorem{thm}{\protect\theoremname}
  \theoremstyle{plain}
  \newtheorem{prop}[thm]{\protect\propositionname}
  \providecommand{\definitionname}{Definition}
  \providecommand{\propositionname}{Proposition}
  \providecommand{\remarkname}{Remark}
\providecommand{\theoremname}{Theorem}
\begin{document}

\title{Representation, simplification and display of fractional powers of
rational numbers in computer algebra}

\author{Albert D. Rich%
\thanks{Albert\_Rich at msn dot com %
}\\
David R. Stoutemyer%
\thanks{dstout at hawaii dot edu%
}}
\maketitle
\begin{abstract}
Simplification of fractional powers of positive rational numbers and
of sums, products and powers of such numbers is taught in beginning
algebra. Such numbers can often be expressed in many ways, as this
article discusses in some detail. Since they are such a restricted
subset of algebraic numbers, it might seem that good simplification
of them must already be implemented in all widely used computer algebra
systems. However, the algorithm taught in beginning algebra uses integer
factorization, which can consume unacceptable time for the large numbers
that often arise within computer algebra. Therefore some systems apparently
use various \emph{ad hoc} techniques that can return an incorrect
result because of not simplifying to 0 the difference between two
equivalent such expressions. Even systems that avoid this flaw often
do not return the same result for all equivalent such input forms,
or return an unnecessarily bulky result that does not have any other
compensating useful property. This article identifies some of these
deficiencies, then describes the advantages and disadvantages of various
alternative forms and how to overcome the deficiencies without costly
integer factorization.
\end{abstract}

\section{Why discuss such an elementary topic here?}

First:\vspace{-0.3em}

\begin{defn*}
An \textsl{absurd numbe}r is one that can be expressed as a rational
number times a product of zero or more fractional powers of positive
rational numbers.\vspace{-0.2em}
\end{defn*}
\begin{rem*}
We need a \textsl{brief} name for this subset of algebraic numbers,
and the inspiration for this one is that \emph{ab} means ``from''
in Latin, and ``absurd numbers'' continues the tradition started
with whimsical names such as surds, imaginary numbers, radicals, irrational
numbers, and surreal numbers.
\end{rem*}
This article discusses the advantages and disadvantages of alternative
ways computer algebra systems can represent, simplify, and display
absurd numbers. Although this is a topic taught in beginning algebra,
some major computer algebra systems do an imperfect or surprisingly
poor job; and we have suggestions for remedies.

\subsection{Simplification of equivalent forms of an absurd number }

Table \ref{SystemComparison} shows the results produced by four systems
for sixteen different input representations of the same absurd number.\vspace{2em}

\begin{table}[H]
\caption{Results for simplifying 16 input representations of the same absurd
number}
\label{SystemComparison}

\begin{tabular}{|c|c|c|c|c|c|c|c|c|}
\hline 
\multirow{2}{*}{{\scriptsize \negthinspace{}\negthinspace{}\negthinspace{}\#\negthinspace{}\negthinspace{}\negthinspace{}}} & \multirow{2}{*}{Input} & \textsl{\footnotesize \negthinspace{}\negthinspace{}Derive\negthinspace{}\negthinspace{}} & \textsl{\footnotesize \negthinspace{}\negthinspace{}\negthinspace{}Mathematica\negthinspace{}\negthinspace{}\negthinspace{}} & \multicolumn{2}{c|}{{\footnotesize Maple}} & \multicolumn{3}{c|}{{\footnotesize Maxima}}\tabularnewline
\cline{3-9} 
 &  & \textsl{\footnotesize \negthinspace{}\negthinspace{}}{\footnotesize default}\textsl{\footnotesize \negthinspace{}\negthinspace{}} & {\footnotesize default} & {\footnotesize default} & {\footnotesize simplify()} & {\footnotesize default} & \negthinspace{}\negthinspace{}{\tiny rootscontract()}\negthinspace{}\negthinspace{}\textsl{\footnotesize \negthinspace{}} & \negthinspace{}\negthinspace{}{\footnotesize radcan()}\negthinspace{}\negthinspace{}\tabularnewline
\hline 
\hline 
\multicolumn{2}{c|}{\quad{}prime bases} & \multicolumn{7}{c}{}\tabularnewline
\hline 
{\scriptsize 1} & \rule[-9pt]{0pt}{26pt}$\frac{2^{4/3}\,7^{2/3}}{3^{2/3}\,5^{2/3}}$ & \textsl{\footnotesize \negthinspace{}\negthinspace{}}$\frac{2\cdot1470^{1/3}}{15}$\textsl{\footnotesize \negthinspace{}\negthinspace{}} & \textsl{\footnotesize \negthinspace{}\negthinspace{}}${\scriptstyle 2\,\left({\textstyle \frac{7}{15}}\right)^{2/3}2^{1/3}}$\textsl{\footnotesize \negthinspace{}\negthinspace{}} & \negthinspace{}\negthinspace{}\negthinspace{}$\frac{2}{15}{\scriptstyle \,2^{1/3}\,3^{1/3}\,5^{1/3}\,7^{2/3}}$\negthinspace{}\negthinspace{}\negthinspace{} & \negthinspace{}\negthinspace{}\negthinspace{}$\frac{2}{15}{\scriptstyle \,2^{1/3}\,3^{1/3}\,5^{1/3}\,7^{2/3}}$\negthinspace{}\negthinspace{}\negthinspace{} & $\frac{2^{4/3}\,7^{2/3}}{3^{2/3}\,5^{2/3}}$ & $\boldsymbol{\frac{2\:98^{1/3}}{225^{1/3}}}$ & \negthinspace{}\negthinspace{}$\frac{2^{4/3}\,7^{2/3}}{3^{2/3}\,5^{2/3}}$\negthinspace{}\negthinspace{}\tabularnewline
\hline 
{\scriptsize 2} & \rule[-9pt]{0pt}{26pt}\negthinspace{}\negthinspace{}\negthinspace{}$\frac{2}{15}{\scriptstyle \,2^{1/3}\,3^{1/3}\,5^{1/3}\,7^{2/3}}$\negthinspace{}\negthinspace{}\negthinspace{} & \textsl{\footnotesize \negthinspace{}\negthinspace{}}$\frac{2\cdot1470^{1/3}}{15}$\textsl{\footnotesize \negthinspace{}\negthinspace{}} & \textsl{\footnotesize \negthinspace{}\negthinspace{}}${\scriptstyle 2\,\left({\textstyle \frac{7}{15}}\right)^{2/3}2^{1/3}}$\textsl{\footnotesize \negthinspace{}\negthinspace{}} & \negthinspace{}\negthinspace{}\negthinspace{}$\frac{2}{15}{\scriptstyle \,2^{1/3}\,3^{1/3}\,5^{1/3}\,7^{2/3}}$\negthinspace{}\negthinspace{}\negthinspace{} & \negthinspace{}\negthinspace{}\negthinspace{}$\frac{2}{15}{\scriptstyle \,2^{1/3}\,3^{1/3}\,5^{1/3}\,7^{2/3}}$\negthinspace{}\negthinspace{}\negthinspace{} & $\frac{2^{4/3}\,7^{2/3}}{3^{2/3}\,5^{2/3}}$ & $\boldsymbol{\frac{2\:98^{1/3}}{225^{1/3}}}$ & \negthinspace{}\negthinspace{}$\frac{2^{4/3}\,7^{2/3}}{3^{2/3}\,5^{2/3}}$\negthinspace{}\negthinspace{}\tabularnewline
\hline 
{\scriptsize 3} & \rule[-9pt]{0pt}{26pt}$\frac{14\cdot2^{1/3}\,3^{1/3}\,5^{1/3}}{15\cdot7^{1/3}}$ & \textsl{\footnotesize \negthinspace{}\negthinspace{}}$\frac{2\cdot1470^{1/3}}{15}$\textsl{\footnotesize \negthinspace{}\negthinspace{}} & \textsl{\footnotesize \negthinspace{}\negthinspace{}}${\scriptstyle 2\,\left({\textstyle \frac{7}{15}}\right)^{2/3}2^{1/3}}$\textsl{\footnotesize \negthinspace{}\negthinspace{}} & \negthinspace{}\negthinspace{}\negthinspace{}$\frac{2}{15}{\scriptstyle \,2^{1/3}\,3^{1/3}\,5^{1/3}\,7^{2/3}}$\negthinspace{}\negthinspace{}\negthinspace{} & \negthinspace{}\negthinspace{}\negthinspace{}$\frac{2}{15}{\scriptstyle \,2^{1/3}\,3^{1/3}\,5^{1/3}\,7^{2/3}}$\negthinspace{}\negthinspace{}\negthinspace{} & $\frac{2^{4/3}\,7^{2/3}}{3^{2/3}\,5^{2/3}}$ & $\boldsymbol{\frac{2\:98^{1/3}}{225^{1/3}}}$ & \negthinspace{}\negthinspace{}$\frac{2^{4/3}\,7^{2/3}}{3^{2/3}\,5^{2/3}}$\negthinspace{}\negthinspace{}\tabularnewline
\hline 
{\scriptsize 4} & \rule[-9pt]{0pt}{26pt}$\frac{2\cdot2^{1/3}\,7^{2/3}}{3^{2/3}\,5^{2/3}}$ & \textsl{\footnotesize \negthinspace{}\negthinspace{}}$\frac{2\cdot1470^{1/3}}{15}$\textsl{\footnotesize \negthinspace{}\negthinspace{}} & \textsl{\footnotesize \negthinspace{}\negthinspace{}}${\scriptstyle 2\,\left({\textstyle \frac{7}{15}}\right)^{2/3}2^{1/3}}$\textsl{\footnotesize \negthinspace{}\negthinspace{}} & \negthinspace{}\negthinspace{}\negthinspace{}$\frac{2}{15}{\scriptstyle \,2^{1/3}\,3^{1/3}\,5^{1/3}\,7^{2/3}}$\negthinspace{}\negthinspace{}\negthinspace{} & \negthinspace{}\negthinspace{}\negthinspace{}$\frac{2}{15}{\scriptstyle \,2^{1/3}\,3^{1/3}\,5^{1/3}\,7^{2/3}}$\negthinspace{}\negthinspace{}\negthinspace{} & $\frac{2^{4/3}\,7^{2/3}}{3^{2/3}\,5^{2/3}}$ & $\boldsymbol{\frac{2\:98^{1/3}}{225^{1/3}}}$ & \negthinspace{}\negthinspace{}$\frac{2^{4/3}\,7^{2/3}}{3^{2/3}\,5^{2/3}}$\negthinspace{}\negthinspace{}\tabularnewline
\hline 
\multicolumn{2}{c|}{{\footnotesize coprime square free}} & \multicolumn{7}{c}{}\tabularnewline
\hline 
{\scriptsize 5} & \rule[-9pt]{0pt}{26pt}$\frac{2^{4/3}\,7^{2/3}}{15^{2/3}}$ & \textsl{\footnotesize \negthinspace{}\negthinspace{}}$\frac{2\cdot1470^{1/3}}{15}$\textsl{\footnotesize \negthinspace{}\negthinspace{}} & \textsl{\footnotesize \negthinspace{}\negthinspace{}}${\scriptstyle 2\,\left({\textstyle \frac{7}{15}}\right)^{2/3}2^{1/3}}$\textsl{\footnotesize \negthinspace{}\negthinspace{}} & $\boldsymbol{\frac{2}{15}{\scriptstyle \,2^{1/3}\,7^{2/3}\,15^{1/3}}}$ & $\boldsymbol{\frac{2}{15}{\scriptstyle \,2^{1/3}\,7^{2/3}\,15^{1/3}}}$ & $\frac{2^{4/3}\,7^{2/3}}{15^{2/3}}$ & $\boldsymbol{\frac{2\:98^{1/3}}{225^{1/3}}}$ & \negthinspace{}\negthinspace{}$\frac{2^{4/3}\,7^{2/3}}{3^{2/3}\,5^{2/3}}$\negthinspace{}\negthinspace{}\tabularnewline
\hline 
{\scriptsize 6} & \rule[-9pt]{0pt}{26pt}$\frac{2}{15}{\scriptstyle \,7^{2/3}\,30^{1/3}}$ & \textsl{\footnotesize \negthinspace{}\negthinspace{}}$\frac{2\cdot1470^{1/3}}{15}$\textsl{\footnotesize \negthinspace{}\negthinspace{}} & \textsl{\footnotesize \negthinspace{}\negthinspace{}}${\scriptstyle 2\,\left({\textstyle \frac{7}{15}}\right)^{2/3}2^{1/3}}$\textsl{\footnotesize \negthinspace{}\negthinspace{}} & $\frac{2}{15}{\scriptstyle \,7^{2/3}\,30^{1/3}}$ & $\frac{2}{15}{\scriptstyle \,7^{2/3}\,30^{1/3}}$ & \negthinspace{}\negthinspace{}$\frac{2\:7^{2/3}30^{1/3}}{15}$\negthinspace{}\negthinspace{} & $\frac{2\:1470^{1/3}}{15}$ & \negthinspace{}\negthinspace{}$\frac{2^{4/3}\,7^{2/3}}{3^{2/3}\,5^{2/3}}$\negthinspace{}\negthinspace{}\tabularnewline
\hline 
{\scriptsize 7} & \rule[-9pt]{0pt}{26pt}$\frac{14\cdot30^{1/3}}{15\cdot7^{1/3}}$ & \textsl{\footnotesize \negthinspace{}\negthinspace{}}$\frac{2\cdot1470^{1/3}}{15}$\textsl{\footnotesize \negthinspace{}\negthinspace{}} & \textsl{\footnotesize \negthinspace{}\negthinspace{}}${\scriptstyle 2\,\left({\textstyle \frac{7}{15}}\right)^{2/3}2^{1/3}}$\textsl{\footnotesize \negthinspace{}\negthinspace{}} & $\frac{2}{15}{\scriptstyle \,7^{2/3}\,30^{1/3}}$ & $\frac{2}{15}{\scriptstyle \,7^{2/3}\,30^{1/3}}$ & \negthinspace{}\negthinspace{}$\frac{2\:7^{2/3}30^{1/3}}{15}$\negthinspace{}\negthinspace{} & $\frac{2\:1470^{1/3}}{15}$ & \negthinspace{}\negthinspace{}$\frac{2^{4/3}\,7^{2/3}}{3^{2/3}\,5^{2/3}}$\negthinspace{}\negthinspace{}\tabularnewline
\hline 
{\scriptsize 8} & \rule[-9pt]{0pt}{26pt}$\frac{2\cdot2^{1/3}\,7^{2/3}}{15^{2/3}}$ & \textsl{\footnotesize \negthinspace{}\negthinspace{}}$\frac{2\cdot1470^{1/3}}{15}$\textsl{\footnotesize \negthinspace{}\negthinspace{}} & \textsl{\footnotesize \negthinspace{}\negthinspace{}}${\scriptstyle 2\,\left({\textstyle \frac{7}{15}}\right)^{2/3}2^{1/3}}$\textsl{\footnotesize \negthinspace{}\negthinspace{}} & $\boldsymbol{\frac{2}{15}{\scriptstyle \,2^{1/3}\,7^{2/3}\,15^{1/3}}}$ & $\boldsymbol{\frac{2}{15}{\scriptstyle \,2^{1/3}\,7^{2/3}\,15^{1/3}}}$ & $\frac{2^{4/3}\,7^{2/3}}{15^{2/3}}$ & $\boldsymbol{\frac{2\:98^{1/3}}{225^{1/3}}}$ & \negthinspace{}\negthinspace{}$\frac{2^{4/3}\,7^{2/3}}{3^{2/3}\,5^{2/3}}$\negthinspace{}\negthinspace{}\tabularnewline
\hline 
{\scriptsize 9} & \rule[-9pt]{0pt}{26pt}${\scriptstyle 2\cdot2^{1/3}}\left(\frac{7}{15}\right)^{2/3}$ & \textsl{\footnotesize \negthinspace{}\negthinspace{}}$\frac{2\cdot1470^{1/3}}{15}$\textsl{\footnotesize \negthinspace{}\negthinspace{}} & \textsl{\footnotesize \negthinspace{}\negthinspace{}}${\scriptstyle 2\,\left({\textstyle \frac{7}{15}}\right)^{2/3}2^{1/3}}$\textsl{\footnotesize \negthinspace{}\negthinspace{}} & $\boldsymbol{\frac{2}{15}{\scriptstyle \,2^{1/3}\,7^{2/3}\,15^{1/3}}}$ & $\boldsymbol{\frac{2}{15}{\scriptstyle \,2^{1/3}\,7^{2/3}\,15^{1/3}}}$ & $\frac{2^{4/3}\,7^{2/3}}{15^{2/3}}$ & $\boldsymbol{\frac{2\:98^{1/3}}{225^{1/3}}}$ & \negthinspace{}\negthinspace{}$\frac{2^{4/3}\,7^{2/3}}{3^{2/3}\,5^{2/3}}$\negthinspace{}\negthinspace{}\tabularnewline
\hline 
{\scriptsize \negthinspace{}\negthinspace{}\negthinspace{}10\negthinspace{}\negthinspace{}\negthinspace{}} & \rule[-11pt]{0pt}{33pt}$\frac{14}{15}\left(\frac{30}{7}\right)^{1/3}$ & \textsl{\footnotesize \negthinspace{}\negthinspace{}}$\frac{2\cdot1470^{1/3}}{15}$\textsl{\footnotesize \negthinspace{}\negthinspace{}} & \textsl{\footnotesize \negthinspace{}\negthinspace{}}${\scriptstyle 2\,\left({\textstyle \frac{7}{15}}\right)^{2/3}2^{1/3}}$\textsl{\footnotesize \negthinspace{}\negthinspace{}} & $\frac{2}{15}{\scriptstyle \,7^{2/3}\,30^{1/3}}$ & $\frac{2}{15}{\scriptstyle \,7^{2/3}\,30^{1/3}}$ & \negthinspace{}\negthinspace{}$\frac{2\:7^{2/3}30^{1/3}}{15}$\negthinspace{}\negthinspace{} & $\frac{2\:1470^{1/3}}{15}$ & \negthinspace{}\negthinspace{}$\frac{2^{4/3}\,7^{2/3}}{3^{2/3}\,5^{2/3}}$\negthinspace{}\negthinspace{}\tabularnewline
\hline 
\multicolumn{2}{c|}{{\small non perfect powers}} & \multicolumn{7}{c}{}\tabularnewline
\hline 
{\scriptsize \negthinspace{}\negthinspace{}\negthinspace{}11\negthinspace{}\negthinspace{}\negthinspace{}} & \rule[-9pt]{0pt}{26pt}$\left(\frac{28}{15}\right)^{2/3}$ & \textsl{\footnotesize \negthinspace{}\negthinspace{}}$\frac{2\cdot1470^{1/3}}{15}$\textsl{\footnotesize \negthinspace{}\negthinspace{}} & \textsl{\footnotesize \negthinspace{}\negthinspace{}}${\scriptstyle 2\,\left({\textstyle \frac{7}{15}}\right)^{2/3}2^{1/3}}$\textsl{\footnotesize \negthinspace{}\negthinspace{}} & $\frac{1}{15}{\scriptstyle 28^{2/3}15^{1/3}}$ & $\frac{1}{15}{\scriptstyle 28^{2/3}15^{1/3}}$ & $\frac{\boldsymbol{28^{2/3}}}{\boldsymbol{15^{2/3}}}$ & $\boldsymbol{\frac{2\:98^{1/3}}{225^{1/3}}}$ & \negthinspace{}\negthinspace{}$\frac{2^{4/3}\,7^{2/3}}{3^{2/3}\,5^{2/3}}$\negthinspace{}\negthinspace{}\tabularnewline
\hline 
{\scriptsize \negthinspace{}\negthinspace{}\negthinspace{}12\negthinspace{}\negthinspace{}\negthinspace{}} & \rule[-9pt]{0pt}{26pt}$\frac{28^{2/3}}{15^{2/3}}$ & \textsl{\footnotesize \negthinspace{}\negthinspace{}}$\frac{2\cdot1470^{1/3}}{15}$\textsl{\footnotesize \negthinspace{}\negthinspace{}} & \textsl{\footnotesize \negthinspace{}\negthinspace{}}${\scriptstyle 2\,\left({\textstyle \frac{7}{15}}\right)^{2/3}2^{1/3}}$\textsl{\footnotesize \negthinspace{}\negthinspace{}} & $\frac{1}{15}{\scriptstyle \,28^{2/3}\,15^{1/3}}$ & $\frac{1}{15}{\scriptstyle \,28^{2/3}\,15^{1/3}}$ & $\frac{\boldsymbol{28^{2/3}}}{\boldsymbol{15^{2/3}}}$ & $\boldsymbol{\frac{2\:98^{1/3}}{225^{1/3}}}$ & \negthinspace{}\negthinspace{}$\frac{2^{4/3}\,7^{2/3}}{3^{2/3}\,5^{2/3}}$\negthinspace{}\negthinspace{}\tabularnewline
\hline 
\multicolumn{2}{c|}{\negthinspace{}\negthinspace{}\negthinspace{}\negthinspace{}{\footnotesize max
reciprocal powers}\negthinspace{}\negthinspace{}\negthinspace{}\negthinspace{}} & \multicolumn{7}{c}{}\tabularnewline
\hline 
{\scriptsize \negthinspace{}\negthinspace{}\negthinspace{}13\negthinspace{}\negthinspace{}\negthinspace{}} & \rule[-9pt]{0pt}{26pt}$\left(\frac{784}{225}\right)^{1/3}$ & \textsl{\footnotesize \negthinspace{}\negthinspace{}}$\frac{2\cdot1470^{1/3}}{15}$\textsl{\footnotesize \negthinspace{}\negthinspace{}} & \textsl{\footnotesize \negthinspace{}\negthinspace{}}${\scriptstyle 2\,\left({\textstyle \frac{7}{15}}\right)^{2/3}2^{1/3}}$\textsl{\footnotesize \negthinspace{}\negthinspace{}} & $\boldsymbol{\frac{1}{225}{\scriptstyle \,784^{1/3}\,225^{2/3}}}$ & $\boldsymbol{\frac{2}{15}{\scriptstyle \,98^{1/3}\,15^{1/3}}}$ & $\boldsymbol{\frac{2\:98^{1/3}}{225^{1/3}}}$ & $\boldsymbol{\frac{2\:98^{1/3}}{225^{1/3}}}$ & \negthinspace{}\negthinspace{}$\frac{2^{4/3}\,7^{2/3}}{3^{2/3}\,5^{2/3}}$\negthinspace{}\negthinspace{}\tabularnewline
\hline 
{\scriptsize \negthinspace{}\negthinspace{}\negthinspace{}14\negthinspace{}\negthinspace{}\negthinspace{}} & \rule[-9pt]{0pt}{26pt}$\frac{784^{1/3}}{225^{1/3}}$ & \textsl{\footnotesize \negthinspace{}\negthinspace{}}$\frac{2\cdot1470^{1/3}}{15}$\textsl{\footnotesize \negthinspace{}\negthinspace{}} & \textsl{\footnotesize \negthinspace{}\negthinspace{}}${\scriptstyle 2\,\left({\textstyle \frac{7}{15}}\right)^{2/3}2^{1/3}}$\textsl{\footnotesize \negthinspace{}\negthinspace{}} & $\boldsymbol{\frac{1}{225}{\scriptstyle \,784^{1/3}\,225^{2/3}}}$ & $\boldsymbol{\frac{2}{15}{\scriptstyle \,98^{1/3}\,15^{1/3}}}$ & $\boldsymbol{\frac{2\:98^{1/3}}{225^{1/3}}}$ & $\boldsymbol{\frac{2\:98^{1/3}}{225^{1/3}}}$ & \negthinspace{}\negthinspace{}$\frac{2^{4/3}\,7^{2/3}}{3^{2/3}\,5^{2/3}}$\negthinspace{}\negthinspace{}\tabularnewline
\hline 
\multicolumn{2}{c|}{one integer power} & \multicolumn{7}{c}{}\tabularnewline
\hline 
{\footnotesize \negthinspace{}\negthinspace{}\negthinspace{}}{\scriptsize 15}{\footnotesize \negthinspace{}\negthinspace{}\negthinspace{}} & \rule[-9pt]{0pt}{26pt}$\frac{2}{15}{\scriptstyle \,1470^{1/3}}$ & \textsl{\footnotesize \negthinspace{}\negthinspace{}}$\frac{2\cdot1470^{1/3}}{15}$\textsl{\footnotesize \negthinspace{}\negthinspace{}} & \textsl{\footnotesize \negthinspace{}\negthinspace{}}${\scriptstyle 2\,\left({\textstyle \frac{7}{15}}\right)^{2/3}2^{1/3}}$\textsl{\footnotesize \negthinspace{}\negthinspace{}} & $\frac{2}{15}{\scriptstyle \,1470^{1/3}}$ & $\frac{2}{15}{\scriptstyle \,1470^{1/3}}$ & $\frac{2\:1470^{1/3}}{15}$ & $\frac{2\:1470^{1/3}}{15}$ & \negthinspace{}\negthinspace{}$\frac{2^{4/3}\,7^{2/3}}{3^{2/3}\,5^{2/3}}$\negthinspace{}\negthinspace{}\tabularnewline
\hline 
{\footnotesize \negthinspace{}\negthinspace{}\negthinspace{}}{\scriptsize 16\negthinspace{}}{\footnotesize \negthinspace{}\negthinspace{}} & \rule[-9pt]{0pt}{26pt}$\frac{1}{15}{\scriptstyle \,11760^{1/3}}$ & \textsl{\footnotesize \negthinspace{}\negthinspace{}}$\frac{2\cdot1470^{1/3}}{15}$\textsl{\footnotesize \negthinspace{}\negthinspace{}} & \textsl{\footnotesize \negthinspace{}\negthinspace{}}${\scriptstyle 2\,\left({\textstyle \frac{7}{15}}\right)^{2/3}2^{1/3}}$\textsl{\footnotesize \negthinspace{}\negthinspace{}} & $\frac{1}{15}{\scriptstyle \,11760^{1/3}}$ & $\frac{2}{15}{\scriptstyle \,1470^{1/3}}$ & $\frac{2\:1470^{1/3}}{15}$ & $\frac{2\:1470^{1/3}}{15}$ & \negthinspace{}\negthinspace{}$\frac{2^{4/3}\,7^{2/3}}{3^{2/3}\,5^{2/3}}$\negthinspace{}\negthinspace{}\tabularnewline
\hline 
\end{tabular}
\end{table}

Regarding columns labeled ``default'':
\begin{defn*}
\textsl{Default simplification} is the result of pressing \fbox{\sf{Enter}}
in Maple, \fbox{\sf{Ctrl}} \fbox{\sf{Enter}} in \textsl{Derive},
or \fbox{\sf{Shift}} \fbox{\sf{Enter}} in \textsl{Mathematica} or
wxMaxima \textbf{--} with the factory-default mode settings and no
transformational or simplification functions anywhere in the input
expression.
\end{defn*}
For Maple 15, $\mathrm{simplify}(\ldots,\mathrm{size})$ gave some
different results than $\mathrm{simplify}(\ldots)$ \textbf{--} not
always smaller in terms of any easily discerned measure. For Maxima
5.24, the $\mathrm{rootscontract}(\ldots)$ function is subject to
a rootsconmode control variable, but its setting does not affect these
examples.

The boldface results in Table \ref{SystemComparison} appear to be
a consequence of happenstance more than intent, because they do not
satisfy any easily discerned goal. For example:\vspace{-0.6em}

\begin{itemize}
\item In $\boldsymbol{\frac{2\,\cdot\,98^{1/3}}{225^{1/3}}}$, both radicands
are composite with the same exponent and 225 is a perfect square,
so why not either combine the two fractional powers or simplify the
denominator to $15^{2/3}$?\vspace{-0.4em}

\item In $\boldsymbol{\frac{2}{15}\,2^{1/3}\,7^{2/3}\,15^{1/3}}$, prime
2 and composite 15 occurs to the same 1/3 power. Thus this pair could
equally well be $6^{1/3}\,5^{1/3}$ or $2^{1/3}\,10^{1/3}$. Also,
since 15 is already composite and has the same exponent as 2, why
not combine the two factors into $30^{1/3}$?\vspace{-0.4em}

\item Similar remarks apply to $\boldsymbol{\frac{2}{15}\,98^{1/3}\,15^{1/3}}$.\vspace{-0.4em}

\item In $\boldsymbol{\frac{1}{225}\,784^{1/3}\,225^{2/3}}$, $784=28^{2}$
and $225=15^{2}$, so why not express this result more comprehensibly
as $\frac{1}{15^{2}}28^{2/3}15^{4/3}\rightarrow\frac{28^{2/3}}{15^{2/3}}\rightarrow\left(\frac{28}{15}\right)^{2/3}$?\vspace{-0.4em}
\end{itemize}
\begin{defn*}
A \textsl{canonical form} for a class of expressions is one for which
all equivalent expressions in the class are represented uniquely.
\end{defn*}
As discussed in \cite{Brown,MosesSimplification,Stoutemyer10commandments},
canonical forms are unnecessarily costly and rigid for the entire
class of expressions addressed by general-purpose computer algebra
systems. However, canonical forms are acceptable and good for the
internal form systems use to represent some restricted classes of
subexpressions such as absurd numbers.

Notice that the default \textsl{Derive} result is the same for all
sixteen alternative inputs of the same absurd number, as is the default
\textsl{Mathematica} result and the Maxima radcan() result.%
\footnote{Full disclosure: We were two of the authors of \textsl{Derive}.%
} This suggests that these three columns are a consequence of transforming
absurd numbers to a canonical form. 

In contrast, none of the other columns in Table \ref{SystemComparison}
display the same result in all sixteen rows, which implies that they
are not simplified to a canonical form. Such non-canonical internal
representations might be defensible if caused by the goal of returning
the closest result to the input that satisfies one of several alternative
easily comprehended goals. However, we will explain how all of the
\textsl{inputs} already exhibit one such alternative set of goals.
Therefore maximum compliance with this goal would return the inputs
unchanged. Moreover, the dramatic transformations of most inputs throughout
Table \ref{SystemComparison} indicate that closeness to the input
was not a goal for any of these systems.

\subsection{Differences of equivalent forms of an absurd number }

It is difficult to fully simplify an expression that contains different
internal representations of the same absurd number, because syntactic
comparison is then insufficient to assess equivalence. This can lead
to a disastrously incorrect result, because if a numerator and denominator
are both equivalent to 0 but default simplification transforms only
the numerator to 0, then most default simplification will incorrectly
return 0 rather than the result of 0/0.%
\footnote{Maple and Maxima both throw an error for 0/0. Less disruptively, \textsl{Mathematica}
returns the symbol $\mathtt{Indeterminate}$ and \textsl{Derive} returns
the symbol ``?''.%
} For example, Table \ref{Differences} displays the results of default
simplification of all differences of input forms from Table \ref{SystemComparison}
for Maple and Maxima. The entry ``$0,\!\frac{0}{0}$'' indicates
that Maxima simplified the expression to $\frac{0}{0}$, but Maple
did not.

\begin{table}[H]
\caption{Default simplification of $\frac{0}{\mathrm{form}{}_{j}-\mathrm{form}_{k}}$
for Maple \& Maxima.\protect \\
\mbox{\hspace{4em}}The correct result is $\frac{0}{0}$.}
\label{Differences}

\begin{tabular}{|c|c|c|cccc|cccccccccccc}
\cline{1-3} 
\multirow{4}{*}{\begin{sideways}
$\mathrm{primal\qquad}$
\end{sideways}} & \rule[-9pt]{0pt}{26pt}$\frac{2^{4/3}\,7^{2/3}}{3^{2/3}\,5^{2/3}}$ & 1 & \textbf{$\boldsymbol{\frac{0}{0}}$} &  &  & \multicolumn{1}{c}{} &  &  &  &  &  &  &  &  &  &  &  & \tabularnewline
\cline{2-3} 
 & \rule[-9pt]{0pt}{26pt}\negthinspace{}\negthinspace{}\negthinspace{}$\frac{2}{15}{\scriptstyle \,2^{1/3}\,3^{1/3}\,5^{1/3}\,7^{2/3}}$\negthinspace{}\negthinspace{}\negthinspace{} & 2 & \textbf{$\boldsymbol{\frac{0}{0}}$} & \textbf{$\boldsymbol{\frac{0}{0}}$} &  & \multicolumn{1}{c}{} &  &  &  &  &  &  &  &  &  &  &  & \tabularnewline
\cline{2-3} 
 & \rule[-9pt]{0pt}{26pt}$\frac{14\cdot2^{1/3}\,3^{1/3}\,5^{1/3}}{15\cdot7^{1/3}}$ & 3 & \textbf{$\boldsymbol{\frac{0}{0}}$} & \textbf{$\boldsymbol{\frac{0}{0}}$} & \textbf{$\boldsymbol{\frac{0}{0}}$} & \multicolumn{1}{c}{} &  &  &  &  &  &  &  &  &  &  &  & \tabularnewline
\cline{2-3} 
 & \rule[-9pt]{0pt}{26pt}$\frac{2\cdot2^{1/3}\,7^{2/3}}{3^{2/3}\,5^{2/3}}$ & 4 & \textbf{$\boldsymbol{\frac{0}{0}}$} & \textbf{$\boldsymbol{\frac{0}{0}}$} & \textbf{$\boldsymbol{\frac{0}{0}}$} & \multicolumn{1}{c}{\textbf{$\boldsymbol{\frac{0}{0}}$}} &  &  &  &  &  &  &  &  &  &  &  & \tabularnewline
\cline{1-7} 
\multirow{6}{*}{\begin{sideways}
\textsl{\scriptsize $\begin{array}{c}
\mathrm{coprime}\:\mathrm{square\: free}\\
\mathrm{distinct\: exponents}
\end{array}\qquad\qquad$}
\end{sideways}} & \rule[-9pt]{0pt}{26pt}$\frac{2^{4/3}\,7^{2/3}}{15^{2/3}}$ & 5 & 0 & 0 & 0 & 0 & \textbf{$\boldsymbol{\frac{0}{0}}$} &  &  &  &  &  &  &  &  &  &  & \tabularnewline
\cline{2-3} 
 & \rule[-9pt]{0pt}{26pt}$\frac{2}{15}{\scriptstyle \,7^{2/3}\,30^{1/3}}$ & 6 & 0 & 0 & 0 & 0 & 0 & \textbf{$\boldsymbol{\frac{0}{0}}$} &  &  &  &  &  &  &  &  &  & \tabularnewline
\cline{2-3} 
 & \rule[-9pt]{0pt}{26pt}$\frac{14\cdot30^{1/3}}{15\cdot7^{1/3}}$ & 7 & 0 & 0 & 0 & 0 & 0 & \textbf{$\boldsymbol{\frac{0}{0}}$} & \textbf{$\boldsymbol{\frac{0}{0}}$} &  &  &  &  &  &  &  &  & \tabularnewline
\cline{2-3} 
 & \rule[-9pt]{0pt}{26pt}$\frac{2\cdot2^{1/3}\,7^{2/3}}{15^{2/3}}$ & 8 & 0 & 0 & 0 & 0 & \textbf{$\boldsymbol{\frac{0}{0}}$} & 0 & 0 & \textbf{$\boldsymbol{\frac{0}{0}}$} &  &  &  &  &  &  &  & \tabularnewline
\cline{2-3} 
 & \rule[-9pt]{0pt}{26pt}${\scriptstyle 2\cdot2^{1/3}}\left(\frac{7}{15}\right)^{2/3}$ & 9 & 0 & 0 & 0 & 0 & \textbf{$\boldsymbol{\frac{0}{0}}$} & 0 & 0 & \textbf{$\boldsymbol{\frac{0}{0}}$} & \textbf{$\boldsymbol{\frac{0}{0}}$} &  &  &  &  &  &  & \tabularnewline
\cline{2-3} 
 & \rule[-11pt]{0pt}{33pt}$\frac{14}{15}\left(\frac{30}{7}\right)^{1/3}$ & 10 & 0 & 0 & 0 & 0 & 0 & \textbf{$\boldsymbol{\frac{0}{0}}$} & \textbf{$\boldsymbol{\frac{0}{0}}$} & 0 & 0 & \textbf{$\boldsymbol{\frac{0}{0}}$} &  &  &  &  &  & \tabularnewline
\cline{1-13} 
\multirow{2}{*}{\begin{sideways}
\textsl{\scriptsize \negthinspace{}\negthinspace{}\negthinspace{}\negthinspace{}$\begin{array}{c}
\mathrm{imperfect}\\
\mathrm{powers}
\end{array}$\negthinspace{}\negthinspace{}\negthinspace{}\negthinspace{}}
\end{sideways}} & \rule[-9pt]{0pt}{26pt}$\left(\frac{28}{15}\right)^{2/3}$ & 11 & 0 & 0 & 0 & 0 & 0 & 0 & 0 & 0 & 0 & \multicolumn{1}{c|}{0} & \textbf{$\boldsymbol{\frac{0}{0}}$} &  &  &  &  & \tabularnewline
\cline{2-3} 
 & \rule[-9pt]{0pt}{26pt}$\frac{28^{2/3}}{15^{2/3}}$ & 12 & 0 & 0 & 0 & 0 & 0 & 0 & 0 & 0 & 0 & \multicolumn{1}{c|}{0} & \textbf{$\boldsymbol{\frac{0}{0}}$} & \textbf{$\boldsymbol{\frac{0}{0}}$} &  &  &  & \tabularnewline
\cline{1-15} 
\multirow{2}{*}{\begin{sideways}
\textsl{\scriptsize \negthinspace{}\negthinspace{}\negthinspace{}\negthinspace{}\negthinspace{}$\begin{array}{c}
\mathrm{reciprocal}\\
\mathrm{exponents}
\end{array}$\negthinspace{}\negthinspace{}\negthinspace{}\negthinspace{}\negthinspace{}}
\end{sideways}} & \rule[-9pt]{0pt}{26pt}$\left(\frac{784}{225}\right)^{1/3}$ & 12 & 0 & 0 & 0 & 0 & 0 & 0 & 0 & 0 & 0 & \multicolumn{1}{c|}{0} & 0 & \multicolumn{1}{c|}{0} & \textbf{$\boldsymbol{\frac{0}{0}}$} &  &  & \tabularnewline
\cline{2-3} 
 & \rule[-9pt]{0pt}{26pt}$\frac{784^{1/3}}{225^{1/3}}$ & 14 & 0 & 0 & 0 & 0 & 0 & 0 & 0 & 0 & 0 & \multicolumn{1}{c|}{0} & 0 & \multicolumn{1}{c|}{0} & \textbf{$\boldsymbol{\frac{0}{0}}$} & \textbf{$\boldsymbol{\frac{0}{0}}$} &  & \tabularnewline
\cline{1-17} 
\multirow{2}{*}{\begin{sideways}
\textsl{\scriptsize \negthinspace{}\negthinspace{}$\begin{array}{c}
\mathrm{1\; integer}\\
\mathrm{base}
\end{array}$\negthinspace{}\negthinspace{}}
\end{sideways}} & \rule[-9pt]{0pt}{26pt}$\frac{2}{15}{\scriptstyle \,1470^{1/3}}$ & 15 & 0 & 0 & 0 & 0 & 0 & 0 & 0 & 0 & 0 & \multicolumn{1}{c|}{0} & 0 & \multicolumn{1}{c|}{0} & 0 & \multicolumn{1}{c|}{0} & \textbf{$\boldsymbol{\frac{0}{0}}$} & \tabularnewline
\cline{2-3} 
 & \rule[-9pt]{0pt}{26pt}$\frac{1}{15}{\scriptstyle \,11760^{1/3}}$ & 16 & 0 & 0 & 0 & 0 & 0 & 0 & 0 & 0 & 0 & \multicolumn{1}{c|}{0} & 0 & \multicolumn{1}{c|}{0} & 0 & \multicolumn{1}{c|}{0} & $0,\boldsymbol{\!\frac{0}{0}}$ & \textbf{$\boldsymbol{\frac{0}{0}}$}\tabularnewline
\hline 
\hline 
 & form$_{j}\:\uparrow$ & \textbf{\scriptsize $\!\!\!\uparrow\mathrm{form}\#:\!\!\!$} & \multicolumn{1}{c|}{1} & \multicolumn{1}{c|}{2} & \multicolumn{1}{c|}{3} & 4 & \multicolumn{1}{c|}{5} & \multicolumn{1}{c|}{6} & \multicolumn{1}{c|}{7} & \multicolumn{1}{c|}{8} & \multicolumn{1}{c|}{9} & \multicolumn{1}{c|}{10} & \multicolumn{1}{c|}{11} & \multicolumn{1}{c|}{12} & \multicolumn{1}{c|}{13} & \multicolumn{1}{c|}{14} & \multicolumn{1}{c|}{15} & \multicolumn{1}{c|}{16}\tabularnewline
\hline 
\multicolumn{3}{|l|}{$\uparrow$ category $\longrightarrow$} & \multicolumn{4}{c|}{primal} & \multicolumn{6}{c|}{\textsl{\scriptsize $\begin{array}{c}
\mathrm{coprime}\:\mathrm{square\: free}\\
\mathrm{distinct\: exponents}
\end{array}$}} & \multicolumn{2}{c|}{\textsl{\scriptsize \negthinspace{}\negthinspace{}\negthinspace{}\negthinspace{}$\begin{array}{c}
\mathrm{imperfect}\\
\mathrm{powers}
\end{array}$\negthinspace{}\negthinspace{}\negthinspace{}\negthinspace{}}} & \multicolumn{2}{c|}{\textsl{\scriptsize \negthinspace{}\negthinspace{}\negthinspace{}\negthinspace{}\negthinspace{}$\begin{array}{c}
\mathrm{reciprocal}\\
\mathrm{exponents}
\end{array}$\negthinspace{}\negthinspace{}\negthinspace{}\negthinspace{}\negthinspace{}}} & \multicolumn{2}{c|}{\textsl{\scriptsize \negthinspace{}\negthinspace{}$\begin{array}{c}
\mathrm{1\; integer}\\
\mathrm{base}
\end{array}$\negthinspace{}\negthinspace{}}}\tabularnewline
\hline 
\end{tabular}
\end{table}

The five categories of forms, such as the primal form, are described
in Section \ref{sec:AlternativeForms-1}. Notice that the only successes
had both forms from the same category, as evidenced by the $\frac{0}{0}$
entries being confined to blocks along the diagonal.

Maple's default simplification recognizes only 17\% of the 0 denominators
and Maxima's default simplification recognizes only 18\%. If Maple
and Maxima default simplification simplified absurd numbers to a canonical
form, then all of the entries would be $\frac{0}{0}$, as they are
for \textsl{Mathematica} and \textsl{Derive}.

The Maple $\mathrm{simplify}(\ldots)$ function \textsl{does} simplify
all of these denominators to 0 despite the fact that it does not transform
all sixteen forms to the same form. Therefore a canonical form is
not absolutely necessary for zero recognition. However zero recognition
is much more difficult to implement and often slower to execute without
a canonical form. Clearly the extra effort invested in $\mathrm{simplify}(\ldots)$
was not invested in the Maple default simplification. Unfortunately,
that can cause $\mathrm{simplify}(\ldots)$ to return incorrect results
despite its admirable sophisticated zero-recognition for absurd numbers:
\begin{equation}
\mathrm{simplify}\left(\dfrac{0}{\mathrm{form}{}_{j}-\mathrm{form}{}_{k}}\right)\label{eq:Simplify(0On0)}
\end{equation}
incorrectly returns 0 rather than the result of 0/0 for 83\% of the
differences because default simplification has already incorrectly
simplified the entire argument to 0 before $\mathrm{simplify}(\ldots)$
has a chance to simplify the denominator to 0.

For similar reasons, in Maxima
\begin{equation}
\mathrm{radcan}\left(\dfrac{0}{\mathrm{form}{}_{j}-\mathrm{form}{}_{k}}\right)\label{eq:Radcan(0On0)}
\end{equation}
incorrectly returns 0 rather than the result of 0/0 for 82\% of the
differences despite the fact that $\mathrm{radcan}\left(\ldots\right)$
produces a canonical form.

Thus as much as practical, it is important for \textsl{default} simplification
to simplify the difference between equivalent forms to 0. By far the
easiest way to implement this is to default simplify equivalent inputs
to a canonical internal form.

Some systems use a canonical form based on factoring radicands, but
do not attempt complete factorization when it becomes too costly.
For example, Table \ref{SimplificationOfADifference} compares results
for the expression
\begin{equation}
\sqrt{12345701^{2}\cdot12345709}\,-\,12345701\,\sqrt{12345709}\,,\label{eq:GcdsCanHelp}
\end{equation}
\vspace{-0.6em}
which is equivalent to 0:

\begin{table}[h]

\caption{Simplification of $\sqrt{12345701^{2}\cdot12345709}\,-\,12345701\,\sqrt{12345709}\,:$}

\label{SimplificationOfADifference}

\noindent \begin{centering}
\begin{tabular}{|c|c|c|c|c|c|c|c|c|}
\hline 
\negthinspace{}\textsl{Derive}\negthinspace{} & \multicolumn{2}{c|}{Maple} & \multicolumn{3}{c|}{\textsl{Mathematica}} & \multicolumn{3}{c|}{Maxima}\tabularnewline
\hline 
\negthinspace{}default\negthinspace{} & \negthinspace{}default\negthinspace{} & \negthinspace{}simplify\negthinspace{} & \negthinspace{}default\negthinspace{} & \negthinspace{}Simplify\negthinspace{} & \negthinspace{}FullSimplify\negthinspace{} & \negthinspace{}default\negthinspace{} & \negthinspace{}rootscontract\negthinspace{} & \negthinspace{}radcan\negthinspace{}\tabularnewline
\hline 
\hline 
0 & non-0 & 0 & non-0 & non-0 & 0 & non-0 & non-0 & non-0\tabularnewline
\hline 
\end{tabular}
\par\end{centering}

\end{table}

Therefore, $\mathrm{simplify}(\ldots)$, $\mathrm{FullSimplify}[\ldots]$,
$\mathrm{rootscontract}(\ldots)$ and $\mathrm{radcan}(\ldots)$ all
\textsl{incorrectly} give 0 for the argument\vspace{-0.6em}
\[
\dfrac{0}{\sqrt{12345701^{2}\cdot12345709}\,-\,12345701\,\sqrt{12345709}}
\]
because their system's default simplification did not simplify to
0 a subexpression that is equivalent to 0. The damage was done \textsl{before}
entering these four \textsl{extra simplification} functions.

Even with some non-canonical internal forms, gcds can be used to determine
when two surds are rational multiples of each other, then combine
them. For expression (\ref{eq:GcdsCanHelp}) with the left radicand
expanded, the gcd of the two radicands is 12345709, so the expression
is equivalent to
\begin{equation}
\sqrt{15241557021803401}\cdot\sqrt{12345709}\,-\,12345701\,\sqrt{12345709},\label{eq:UseGCDStep2}
\end{equation}
then $\gcd(15241557021803401,12345701)\rightarrow12345701$ and
\begin{eqnarray}
15241557021803401/12345701 & \rightarrow & 12345701\label{eq:UseGCDStep3}
\end{eqnarray}
giving
\begin{eqnarray}
\sqrt{12345701^{2}}\cdot\sqrt{12345709}\,-\,12345701\,\sqrt{12345709} & \rightarrow & 0\label{eq:UseGCDSTep4}
\end{eqnarray}
 As indicated by the great variety of inputs and results in Table
\ref{SystemComparison}, there are a bewildering number of ways absurd
numbers can be internally stored and displayed. However, the sixteen
canonical inputs can be organized into a spectrum of categories discussed
in Section \ref{sec:AlternativeForms-1}, which compares their algorithms,
advantages, and disadvantages, with conclusions in Section \ref{sec:Conclusion}.

This article is complementary to \cite{StoutemyerProductsOfPowers},
which instead addresses products of fractional powers of \textsl{rational
powers of non-numeric expressions}. The difficulties there are different,
entailing the need to be correct even when a subsequent substitution
makes a denominator 0 or a radicand non-positive.

\section{A spectrum of representation categories\label{sec:AlternativeForms-1}}

For efficiency and ease of implementation, most computer algebra systems
use an internal representation during simplification that does not
completely correspond to displayed results. For example, often subtraction
is represented using multiplication by a negative numeric coefficient.

We can do this for absurd numbers too: We can choose an internal representation
that is easy to implement and/or fast to simplify; but for each example
display the most concise of some alternative forms. As proposed in
\cite{StoutemyerInterface,UsefulNumbers}, we could also have a transformation
wizard that opens an Alternative Transformation dialog box for a highlighted
subexpression. For example if the highlighted subexpression is $\frac{1}{225}784^{1/3}\,225^{2/3}$,
then the dialog box might be

\noindent \begin{center}
\includegraphics{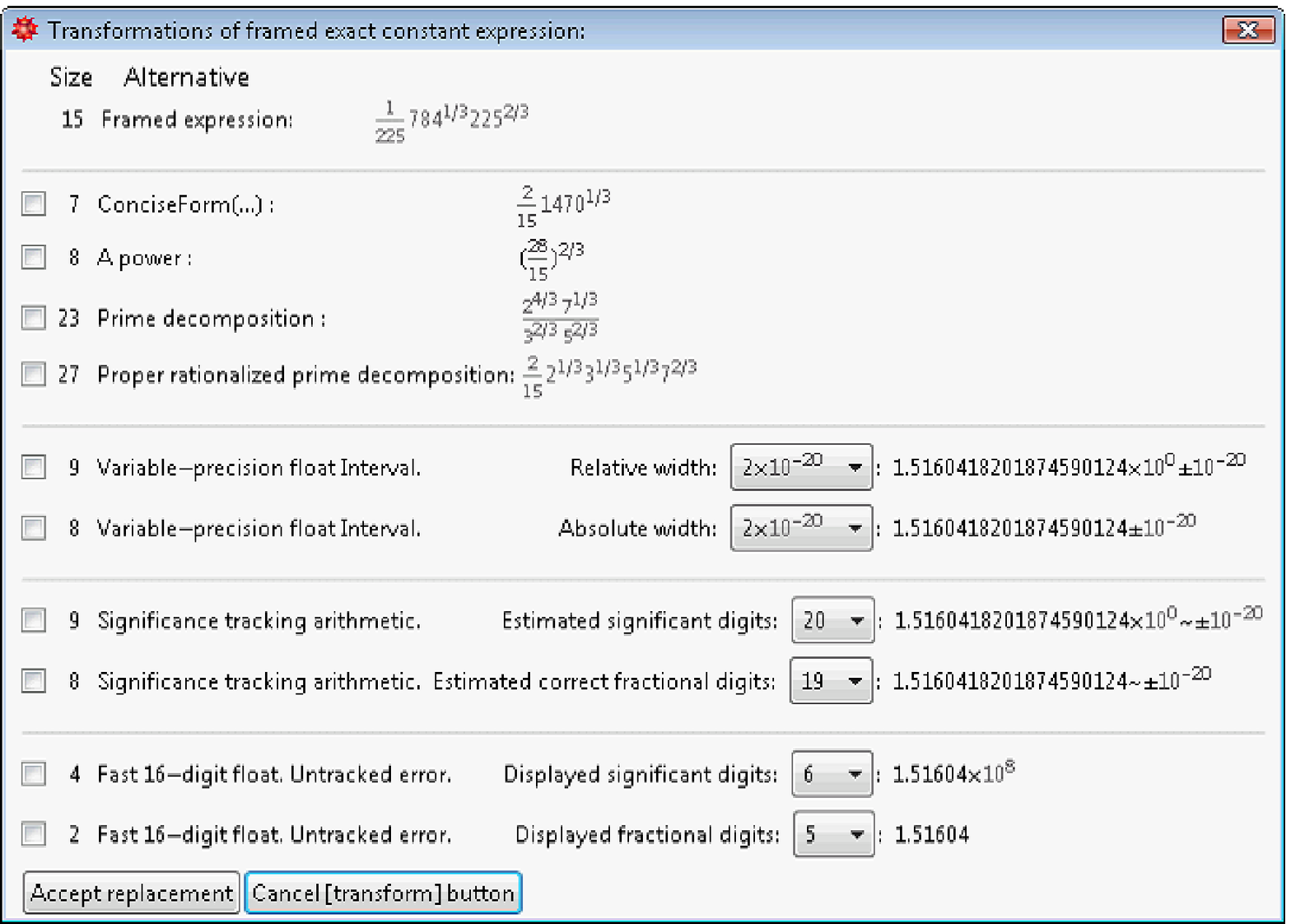}
\par\end{center}

\noindent This is a mock up created with the \textsl{Mathematica}
$\mathrm{CreateDialog}[\ldots]$ function. The size column would be
some easily-computed measure that correlates positively reasonably
well with the area used to display the alternative.

The integer or rational number bases of fractional powers can be treated
similar to variables in a data structure, but perhaps with additional
rules for when an exponent becomes integer. Therefore we can represent
expressions having more than one such prime base in either distributed
or recursive form. For example,
\[
\frac{7}{5}2^{2/3}\,3^{4/5}\,5^{1/2}-\frac{8}{5}3^{4/5}\,5^{1/2}+3^{2/3}\,5^{1/2}+3^{1/4}+6
\]
\emph{versus\vspace{-0.7em}
}

\[
\left(\left(\frac{7}{5}2^{2/3}-\frac{8}{5}\right)3^{4/5}+3^{2/3}\right)5^{1/2}+3^{1/4}+6.
\]
For brevity the forms and algorithms discussed in this section assume
only distributed form. However they can be adapted to recursive form,
which has certain advantages such as often being more concise. For
all forms we assume some canonical ordering of the factors in that
form, such as in order of increasing base magnitude.

This section discusses the sixteen numbered input forms in Table \ref{SystemComparison},
in that order. These examples are partitioned into five categories
depending on the properties of the radicands and their exponents.
To help show the relationships of the alternatives between and within
each category, Table \ref{Flo:AlternateAbSurds} shows the same input
forms in the same order, with descriptive phrases describing the properties
of the radicands, their exponents and restrictions on any rational
coefficient. This list of alternative forms is incomplete. However,
it fits on one page; it collectively suggests additional forms; and
we hope that it includes all of the most useful forms in general \textbf{--}
not only for this example.

\begin{table}[!h]
\caption{Some alternative forms for the same absurd number:}
\label{Flo:AlternateAbSurds}

\begin{tabular}{|c|c|c|c|c|}
\hline 
{\small \negthinspace{}\negthinspace{}\#\negthinspace{}\negthinspace{}} & \textbf{Ratio form} & \textbf{Product form} & \negthinspace{}\negthinspace{}\textbf{\small Coef}\negthinspace{}\negthinspace{} & \textbf{Name of form}\tabularnewline
\hline 
\hline 
\multicolumn{4}{|c|}{} & \textbf{(Bases are primes:)}\tabularnewline
\hline 
{\small 1} & \rule[-11pt]{0pt}{33pt}$\dfrac{2^{4/3}\,7^{2/3}}{3^{2/3}\,5^{2/3}}$ & $2^{4/3}\,3^{-2/3}\,5^{-2/3}\,7^{2/3}$ & $\pm1$ & pure primal\tabularnewline
\hline 
{\small 2} & \rule[-11pt]{0pt}{33pt}$\!\!\dfrac{2\,2^{1/3}\,3^{1/3}\,5^{1/3}\,7^{2/3}}{15}\!\!$ & $\frac{2}{15}\,2^{1/3}\,3^{1/3}\,5^{1/3}\,7^{2/3}$ & $\mathbb{Q}$ & proper-exponent primal\tabularnewline
\hline 
{\small 3} & \rule[-11pt]{0pt}{34pt}$\!\!\dfrac{14\!\cdot\!2^{1/3}\,3^{1/3}\,5^{1/3}}{15\!\cdot\!7^{1/3}}\negthinspace\!$ & $\frac{14}{15}\,2^{1/3}\,3^{1/3}\,5^{1/3}\,7^{-1/3}$ & $\mathbb{Q}$ & \negthinspace{}tight balanced-exponent primal\tabularnewline
\hline 
{\small 4} & \rule[-11pt]{0pt}{33pt}$\!\dfrac{2\!\cdot\!2^{1/3}\,7^{2/3}}{3^{2/3}\,5^{2/3}}\negthinspace$ & $\!\!2\!\cdot\!2^{1/3}\,3^{-2/3}\,5^{-2/3}\,7^{2/3}\!\!$ & $\mathbb{Q}$ & loose balanced-exponent primal\tabularnewline
\hline 
\multicolumn{4}{|c|}{} & \textbf{(Bases are coprime and square-free}:)\tabularnewline
\hline 
{\small 5} & \rule[-11pt]{0pt}{33pt}$\dfrac{2^{4/3}\,7^{2/3}}{15^{2/3}}$ & $2^{4/3}\,7^{2/3}\,15^{-2/3}$ & $\pm1$ & {\small $\begin{array}{c}
\mathrm{coprime\; square\; free\; integer\; bases,}\\
\mathrm{distinct\; exponents}
\end{array}$}\tabularnewline
\hline 
{\small 6} & \rule[-11pt]{0pt}{33pt}$\dfrac{2\,7^{2/3}\,30^{1/3}}{15}$ & $\frac{2}{15}\,7^{2/3}\,30^{1/3}$ & $\mathbb{Q}$ & {\small $\begin{array}{c}
\mathrm{coprime\; square\; free\; integer\; bases,}\\
\mathrm{distinct\; proper\; exponents}
\end{array}$}\tabularnewline
\hline 
7 & \rule[-11pt]{0pt}{33pt}$\dfrac{14\,30^{1/3}}{15\,7^{1/3}}$ & $\frac{14}{15}\,7^{-1/3}\,30^{1/3}$ & $\mathbb{Q}$ & {\small $\begin{array}{c}
\mathrm{coprime\; square\; free\; integer\; bases,}\\
\mathrm{distinct\; tight\; balanced\; exponents}
\end{array}$}\tabularnewline
\hline 
{\small 8} & \rule[-11pt]{0pt}{33pt}$\!\dfrac{2\!\cdot\!2^{1/3}\,7^{2/3}}{15^{2/3}}\!$ & $2\cdot2^{1/3}\,7^{2/3}\,15^{-2/3}$ & $\mathbb{Q}$ & {\small $\begin{array}{c}
\mathrm{coprime\; square\; free\; integer\; bases,}\\
\mathrm{distinct\; loose\; balanced\; exponents}
\end{array}$}\tabularnewline
\hline 
{\small \negthinspace{}\negthinspace{}9\negthinspace{}\negthinspace{}} & \rule[-8pt]{0pt}{24pt}$2\!\cdot\!2^{1/3}\left(\frac{7}{15}\right)^{2/3}$ & $2\!\cdot\!2^{1/3}\left(\frac{7}{15}\right)^{2/3}$ & $\mathbb{Q}$ & {\small $\begin{array}{c}
\mathrm{coprime\; square\; free\; rational\; bases,}\\
\mathrm{distinct\; proper\; exponents}
\end{array}$}\tabularnewline
\hline 
{\small \negthinspace{}\negthinspace{}10\negthinspace{}\negthinspace{}} & \rule[-11pt]{0pt}{33pt}$\frac{14}{15}\left(\frac{30}{7}\right)^{1/3}$ & $\frac{14}{15}\left(\frac{30}{7}\right)^{1/3}$ & $\mathbb{Q}$ & {\small $\begin{array}{c}
\mathrm{coprime\; square\; free\; rational\; bases,}\\
\mathrm{distinct\; tight\; balanced\; exponents}
\end{array}$}\tabularnewline
\hline 
\multicolumn{4}{|c|}{} & \textbf{(Bases are imperfect powers:)}\tabularnewline
\hline 
{\small \negthinspace{}\negthinspace{}11\negthinspace{}\negthinspace{}} & \rule[-8pt]{0pt}{24pt}$\left(\frac{28}{15}\right)^{2/3}$ & $\left(\frac{28}{15}\right)^{2/3}$ & $\pm1$ & {\small $\begin{array}{c}
\mathrm{single\; rational\; imperfect\; power\; base,}\\
\mathrm{positive\; exponent}
\end{array}$}\tabularnewline
\hline 
{\small \negthinspace{}\negthinspace{}12\negthinspace{}\negthinspace{}} & \rule[-11pt]{0pt}{33pt}$\dfrac{28^{2/3}}{15^{2/3}}$ & $15^{-2/3}\,28^{2/3}$ & $\pm1$ & {\small $\begin{array}{c}
\mathrm{ratio\; of\; two\; imperfect\; power\; integer\; bases,}\\
\mathrm{positive\; exponents}
\end{array}$}\tabularnewline
\hline 
\multicolumn{4}{|c|}{} & \textbf{(Maximal reciprocal exponents}:)\tabularnewline
\hline 
{\small \negthinspace{}\negthinspace{}13\negthinspace{}\negthinspace{}} & \rule[-8pt]{0pt}{24pt}$\left(\frac{784}{225}\right)^{1/3}$ & $\left(\frac{784}{225}\right)^{1/3}$ & $\pm1$ & {\small $\begin{array}{c}
\mathrm{single\; rational\; base,}\\
\mathrm{maximal\; positive\; reciprocal\; exponent}
\end{array}$}\tabularnewline
\hline 
{\small \negthinspace{}\negthinspace{}14\negthinspace{}\negthinspace{}} & \rule[-11pt]{0pt}{33pt}$\dfrac{784^{1/3}}{225^{1/3}}$ & $225^{-1/3}\,784^{1/3}$ & $\pm1$ & {\small $\begin{array}{c}
\mathrm{ratio\; of\; two\; integer\; bases,}\\
\mathrm{maximal\; positive\; reciprocal\; exponents}
\end{array}$}\tabularnewline
\hline 
\multicolumn{4}{|c|}{} & \textbf{(One integer base:)}\tabularnewline
\hline 
{\small \negthinspace{}\negthinspace{}15\negthinspace{}\negthinspace{}} & \rule[-11pt]{0pt}{33pt}$\dfrac{2\,1470^{1/3}}{15}$ & $\frac{2}{15}\,1470^{1/3}$ & $\mathbb{Q}$ & {\small $\begin{array}{c}
\mathrm{single\; minimal\; integer\; base,}\\
\mathrm{proper\; exponent}
\end{array}$}\tabularnewline
\hline 
\negthinspace{}\negthinspace{}16\negthinspace{}\negthinspace{} & \rule[-11pt]{0pt}{33pt}$\dfrac{11760^{1/3}}{15}$ & $\frac{1}{15}\,11760^{1/3}$ & $\mathbb{Q}$ & {\small $\begin{array}{c}
\mathrm{single\; integer\; imperfect\; power\; base,}\\
\mathrm{proper\; exponent}
\end{array}$}\tabularnewline
\hline 
\end{tabular}
\end{table}

\subsection{Primal forms}

There are several canonical forms for absurd numbers based on prime
factorization.

\subsubsection{Pure primal form\emph{\vspace{0.3em}
}}
\begin{defn*}
\textsl{Pure primal} form is 0, 1, or a product of one or more distinct
primes raised to nonzero reduced rational powers, with the factors
in increasing order of their prime bases, or this form preceded by
a minus sign.\end{defn*}
\begin{rem*}
For example, input \#1 in Table \ref{Flo:AlternateAbSurds} has this
form.\end{rem*}
\begin{prop}
Pure primal form is canonical for absurd numbers.\end{prop}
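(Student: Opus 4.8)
The plan is to show two things: first, that every absurd number can be written in pure primal form, and second, that this form is unique. For existence, I would start from the definition of an absurd number as a rational number $q$ times a finite product of fractional powers $r_i^{e_i}$ with $r_i \in \mathbb{Q}^{+}$ and $e_i \in \mathbb{Q}$. Each positive rational base $r_i$ factors (over the primes) as $\prod_p p^{a_{p,i}}$ with integer exponents $a_{p,i}$ (finitely many nonzero), so $r_i^{e_i} = \prod_p p^{a_{p,i} e_i}$; likewise $q = \pm\prod_p p^{b_p}$. Collecting all factors and adding exponents prime-by-prime, the product becomes $\pm \prod_p p^{c_p}$ where $c_p = b_p + \sum_i a_{p,i} e_i \in \mathbb{Q}$. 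Discarding primes with $c_p = 0$, putting each surviving rational exponent in lowest terms, and ordering the primes increasingly yields pure primal form (with the degenerate cases: if the value is $0$ it is $0$, and if all $c_p$ vanish it is $\pm 1$). So existence is essentially bookkeeping with the fundamental theorem of arithmetic for rationals.

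For uniqueness, suppose $\pm\prod_{p \in S} p^{c_p} = \pm\prod_{p \in T} p^{d_p}$ as real numbers, with both sides in pure primal form (all $c_p, d_p$ nonzero reduced rationals, $S,T$ finite prime sets). Both sides are positive reals, so the signs agree; dividing, it suffices to show $\prod_{p} p^{f_p} = 1$ with $f_p \in \mathbb{Q}$ forces all $f_p = 0$. Clear denominators: let $N$ be a common denominator of the $f_p$, so raising to the $N$th power gives $\prod_p p^{N f_p} = 1$ with $N f_p \in \mathbb{Z}$. Now group positive and negative integer exponents onto opposite sides to get an equation $\prod p^{g_p} = \prod p^{h_p}$ between two positive integers with disjoint prime supports and nonnegative exponents; by unique factorization of integers both sides equal $1$, forcing all $g_p = h_p = 0$, hence all $f_p = 0$. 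This also pins down $S = T$ and $c_p = d_p$ because a nonzero reduced rational exponent is determined by the value $N f_p \in \mathbb{Z}$ together with $N$.

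I would organize the write-up as: (1) reduce to the sign-stripped, nonzero case; (2) prove the "multiplicative independence of primes over $\mathbb{Q}$" lemma via the clear-denominators-and-cross-multiply argument above; (3) deduce uniqueness of the exponent vector, hence of the ordered form; (4) note existence by the prime-collection computation. The only genuinely substantive step is the multiplicative independence lemma in step (2) — everything else is definitional unpacking and the standard fact that $\mathbb{Q}^{+}$ is a free abelian group on the primes. The main obstacle, such as it is, is simply being careful that taking $N$th powers is reversible on positive reals (so no solutions are lost or gained) and handling the edge cases $0$ and $\pm 1$ explicitly so the "canonical form" claim is literally true as stated rather than merely true for generic inputs.
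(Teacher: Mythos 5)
Your proof is correct and follows essentially the route the paper intends: its one-line proof ("adapt almost any proof of the fundamental theorem of arithmetic from positive integer to reduced rational exponents, and employ the chosen canonical ordering") is precisely the argument you spell out, with your clear-denominators lemma on multiplicative independence of primes over $\mathbb{Q}$ being the adaptation of unique factorization to rational exponents. Yours is simply the fully worked-out version of the paper's sketch, including the edge cases $0$ and $\pm 1$ that the paper leaves implicit.
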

\begin{proof}
Adapt almost any proof of the fundamental theorem of arithmetic from
positive integer to reduced rational exponents, and employ the chosen
canonical ordering of factors.
\end{proof}
The algorithm for \textsl{multiplication} of two absurd numbers represented
using pure primal forms is obvious, easy to implement, and fast \textbf{--}
as is raising a pure primal form to a rational power. Unfortunately
for computer algebra implementers:
\begin{enumerate}
\item Conversion of composite radicands to primes requires integer factoring.
\item Factorization of large integers can be prohibitively time consuming.
\item Large integers occur rather often within computer algebra \textbf{--}
even when the input and final result do not contain large integers.
\end{enumerate}
\textsl{Addition} of pure primal forms is also quite slow because
there is very little opportunity for making results more concise by
merely combining \textsl{syntactically} similar terms: The only syntactically
similar terms are ones that are identical or differ only in their
signs, such as
\[
2^{3/2}\,5^{-7/3}+\left(-2^{3/2}\,5^{-7/3}\right)\;\rightarrow\;(1-1)2^{3/2}\,5^{-7/3}\;\rightarrow\;0\,.
\]
Thus to simplify $2^{5/2}\,5^{-5/3}-2^{1/2}\,5^{-2/3}$ to this canonical
form we must recognize that the differences in their corresponding
exponents are all integers, then \textsl{temporarily} use a form that
makes them syntactically similar such as the proper-exponent primal
form discussed in the next subsection. Then in general we must factor
the resulting rational coefficient to convert the result to pure primal
form:
\begin{multline*}
2^{5/2}\,5^{-5/3}-2^{1/2}\,5^{-2/3}\;\rightarrow\;\frac{4}{25}\left(2^{1/2}\,5^{1/3}\right)-\frac{1}{5}\left(2^{1/2}\,5^{1/3}\right)\;\rightarrow\;-\frac{1}{25}\left(2^{1/2}\,5^{1/3}\right)\;\rightarrow\;\\
-5^{-2}\left(2^{1/2}\,5^{1/3}\right)\;\rightarrow\;-2^{1/2}\,5^{-5/3}.
\end{multline*}
If the differences in the exponents are not all integer, then the
two numbers are not commensurate and their sum or difference must
be represented as a more general expression than a single absurd number.

To incur the cost of integer factorization not only initially but
also after most additions of similar terms makes pure primal form
a costly internal form, but it can be useful as an optional \textsl{result}
form. Most computer algebra systems contain a rational-number factorization
function that returns the pure primal form.

\subsubsection{Proper-exponent primal form\label{sub:Proper-primal-forms}\emph{\vspace{0.3em}
}}
\begin{defn*}
\textsl{Proper-exponent primal} form is a rational number times a
pure primal form in which all of the exponents are in the interval
$(0,1)$.\end{defn*}
\begin{rem*}
For example, input \#2 in Table \ref{Flo:AlternateAbSurds} is proper-exponent
primal form.
\end{rem*}
We were taught in beginning algebra to simplify a fractional power
of a positive rational number $r^{\alpha}$ by converting it to this
form. The algorithm \textsl{can} be expressed as
\begin{enumerate}
\item Factor $r$.
\item Represent the factored $r$ as a product containing negative powers
rather than as a ratio.
\item Distribute $\alpha$ over the factors.
\item Extract and multiply together the rational numbers corresponding to
the \textsl{floor} of each resulting exponent.
\end{enumerate}
Using the floor automatically rationalizes the denominator.
\begin{prop}
Proper-exponent primal form is canonical.\end{prop}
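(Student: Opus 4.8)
The plan is to bootstrap off the canonicity of pure primal form (the immediately preceding proposition) together with the uniqueness of the decomposition of a rational number into its integer part and a fractional part in $[0,1)$. Throughout I fix the sign convention that makes the form unambiguous: the pure primal factor is one of $0$, $1$, or a product of distinct primes raised to \emph{positive} reduced rational exponents in increasing base order, and the rational coefficient carries the sign, with $x=0$ represented by coefficient $0$.

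\textbf{Existence.} Let $x$ be an absurd number; if $x=0$ we are done, so assume $x\neq 0$. By the preceding proposition $x$ has a pure primal form $\varepsilon\,\prod_{i=1}^{m} p_i^{a_i}$ with $\varepsilon\in\{\pm1\}$, distinct primes $p_1<\cdots<p_m$, and nonzero reduced rationals $a_i$. Write $a_i=f_i+e_i$ with $f_i=\lfloor a_i\rfloor\in\mathbb{Z}$ and $e_i=a_i-f_i\in[0,1)$. Set $q=\varepsilon\,\prod_{i=1}^m p_i^{f_i}\in\mathbb{Q}\setminus\{0\}$ and take the primal part to be $\prod_{i:\,e_i\neq 0} p_i^{e_i}$, whose exponents all lie in $(0,1)$. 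Then $q$ times this primal part equals $x$, so $x$ has a proper-exponent primal form. (Step 4 of the beginning-algebra algorithm is precisely this extraction of $\prod p_i^{f_i}$, and using the floor is what automatically rationalizes the denominator.)

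\textbf{Uniqueness.} Suppose $x = q\,\prod_{i} p_i^{e_i} = q'\,\prod_{j} {p'_j}^{e'_j}$ are two proper-exponent primal forms, with $q,q'\in\mathbb{Q}\setminus\{0\}$, distinct primes within each product, and all $e_i,e'_j\in(0,1)$. Prime-factor the coefficients, $q=\varepsilon\prod_i p_i^{c_i}$ and $q'=\varepsilon'\prod_j {p'_j}^{c'_j}$ with $\varepsilon,\varepsilon'\in\{\pm1\}$ and integer exponents $c_i,c'_j$. Substituting, each side, after deleting primes whose total exponent is $0$ and reordering, becomes a pure primal form of $x$. By canonicity of pure primal form, $\varepsilon=\varepsilon'$ and for every prime $p$ the total exponent matches: $c+e=c'+e'$, where $c,c'\in\mathbb{Z}$ come from the coefficients and $e,e'\in[0,1)$ are the corresponding primal-part exponents (equal to $0$ when $p$ is absent). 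Since $c-c'=e'-e\in(-1,1)$ is an integer, $c=c'$ and hence $e=e'$. Therefore the two primal parts coincide and $q=\varepsilon\prod_p p^{c_p}=\varepsilon'\prod_p p^{c'_p}=q'$.

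\textbf{Main obstacle.} There is no genuine difficulty here; the only points needing care are pinning down the sign convention and the treatment of $0$ so that ``the form'' is well defined, and recognizing that once the rational coefficient is itself prime-factored the whole expression collapses to a pure primal form — at which point the previously proved canonicity, plus uniqueness of the floor/fractional-part decomposition, finishes everything.
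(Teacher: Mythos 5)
Your proof is correct and follows essentially the same route as the paper: both arguments bootstrap off the canonicity of pure primal form and the single-valuedness of the floor/fractional-part decomposition of each exponent, checking the conversion in both directions. You merely spell out the sign convention, the treatment of $0$, and the uniqueness step (that $c-c'=e'-e\in(-1,1)$ forces $c=c'$) in more detail than the paper does.
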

\begin{proof}
Step 3 above gives the canonical pure primal form. The floor function
is defined and single valued for all reals. Thus the extracted rational
parts, their product, and any residual fractional powers are unique.
Conversely, if we start with the proper-exponent primal form, factor
the rational part, combine similar primes and order the bases, then
the result is unique.
\end{proof}
The algorithms for multiplication of two proper-exponent primal forms
and for raising one to a rational power are nearly as obvious, easy
to implement, and fast as for pure primal form. Moreover addition
of proper-exponent primal forms is much easier and more efficient
than for pure primal forms: Sums of proper-exponent primal forms can
be collected to make another such form if and only if the irrational
factors are \textsl{identical} which is fast to check. Moreover, when
the irrational factors are identical, there is no need to factor the
resulting rational coefficient.

Unfortunately, integer factorization is still generally needed to
transform a fractional power of a positive rational number to proper-exponent
primal form.

Pure primal form often requires less display area than other primal
forms because there is no rational factor formed from expanding a
product of integer powers of the prime bases having exponents not
in a designated interval. However, users often feel that extracting
that rational factor makes the result ``simpler''. Students are
also taught to rationalize denominators. Therefore, display of proper-exponent
primal form complies with users' comfort zones.

\begin{flushright}
``... \textsl{the customer is always right}.''\\
\textbf{--} Marshall Field
\par\end{flushright}

\subsubsection{Tight balanced-exponent primal form}

Any fixed near-unit-width exponent interval can be used instead of
$(0,1)$ for primal and other categories of forms. The nearly balanced
interval $(-1/2,1/2]$ has some appeal because the magnitude of the
exponents never exceeds 1/2. 
\begin{defn*}
\textsl{Tight balanced-exponent primal} form is a rational number
times a pure primal form in which all of the exponents are in the
interval $(-1/2,1/2]$.\end{defn*}
\begin{rem*}
For example, input \#3 in Table \ref{Flo:AlternateAbSurds} has this
form. As another example with input $2^{2/3}$ we rationalize the
\textsl{numerator} to return $2/2^{1/3}$.
\end{rem*}

\subsubsection{Loose balanced-exponent primal form\emph{\vspace{0.3em}
}}
\begin{defn*}
\textsl{Loose balanced-exponent primal} form is a rational coefficient
times a pure primal form in which all of the exponents are in the
interval $(-1,1)$ , none of the numerator radicands divide the denominator
of the rational coefficient, and none of the denominator radicands
divide the numerator of the rational coefficient.\end{defn*}
\begin{rem*}
For example, input \#4 in Table \ref{Flo:AlternateAbSurds} has this
form.
\end{rem*}
This form can be derived from pure primal form by separately making
the numerator and denominator exponents proper, and \textsl{not} rationalizing
any denominators or numerators. For example, $\sqrt{2}$, $1/\sqrt{2}$,
and $5\times7^{2/3}/(2\times3^{4/5})$ have this form, but $\sqrt{2}/6$
does not because $\gcd(2,6)\neq1$. Loose balanced-exponent form is
often more concise than proper-exponent or tight-balanced primal form
because rationalizing denominators or numerators often increases bulk.
For example, compare $1/\sqrt{1234567891}$ with $\sqrt{1234567891}/1234567891$
and compare $\sqrt{9876543211}$ with
\[
9876543211/\sqrt{9876543211}.
\]
 Such rationalizations are also inconsistent with customary simplification
of \textsl{non-numeric} radicands: Most people prefer $1/u^{1/3}$
or $u^{-1/3}$ to the unreduced $u^{2/3}/u$, and avoiding unnecessary
form changes upon substitution of numbers is a virtue. We can of course
use a product containing negative exponents rather than a ratio for
the internal form.

However, addition is harder with loose balanced-exponent primal form
than with proper-exponent primal form, because mere syntactic comparison
does not reveal all commensurate absurd numbers. For example, $2^{1/2}$
and $2^{-1/2}$ are not syntactically similar, but $2^{1/2}+2^{-1/2}\rightarrow2^{1/2}+2^{1/2}/2\rightarrow(3/2)2^{1/2}\rightarrow3\times2^{-1/2}$
. Therefore we do not recommend this as an \textsl{internal} form,
but it is a good display form.

\subsection{Coprime square-free distinct-exponent forms}

Gcd calculations are sufficient to make all fractional power bases
in a result mutually coprime. For example, $\gcd(30,42)\rightarrow6$,
so
\[
30^{1/2}42^{1/3}\;\rightarrow\;\left(5\cdot6\right)^{1/2}\,\left(6\cdot7\right)^{1/3}\;\rightarrow\;5^{1/2}\,6^{1/2}\,6^{1/3}\,7^{1/3}\;\rightarrow\;5^{1/2}\,6^{5/6}\,7^{1/3}.
\]

This \textsl{particular} result is canonical, but relative primality
alone is not sufficient to guarantee canonicality. For example, the
bases in the equivalent forms $5^{1/2}\,6^{1/2}\,7^{1/3}$ and $3^{1/2}\,7^{1/3}\,10^{1/2}$
are coprime. Combining factors having the same exponents to make all
of the exponents distinct makes this example canonical: We combine
$5^{1/3}$ and $6^{1/3}$ or combine $3^{1/3}$ and $10^{1/3}$ giving
$7^{1/3}\,30^{1/2}$ either way.

However, 24 and 5 are coprime in $24^{1/2}\,5^{1/3}$ with distinct
exponents, as are 2, 3 and 5 in $2^{3/2}\cdot3^{1/2}\,5^{1/3}$, but
both products are equivalent. So we need an additional criterion for
canonicality:
\begin{defn*}
An integer > 1 is \textsl{square free} if none of its prime factors
occurs more than once.\end{defn*}
\begin{rem*}
For example, 6 is square free, but $24=2^{3}\,3$ is not.\end{rem*}
\begin{defn*}
\textsl{coprime square-free integer bases distinct-exponent} form
is a product of coprime positive square-free integers raised to distinct
rational exponents, or {\small $-1$} times that, or 0.\end{defn*}
\begin{rem*}
For example, input \#5 in Table \ref{Flo:AlternateAbSurds} has this
form.
\end{rem*}
This form can be computed from pure primal form as follows: For each
distinct exponent, combine all of the factors having that exponent,
raising the product of their primes to their shared exponent. There
is clearly only one way to do this, the resulting bases are square-free
because they are a product of distinct primes, and the resulting bases
are coprime because each prime occurs in only one of the bases.

Conversely, to compute the pure primal form from this form, factor
each base then distribute the distinct exponent of that square-free
base over the resulting product of primes. Each distinct prime can
occur in only one of the coprime factors, so the distinct exponent
for each base will be the final exponent of all the primes in that
base. This result is clearly unique when the bases are ordered canonically,
so the coprime square-free integer bases distinct-exponent form is
canonical.

When multiplying two such forms, if a base $b$ in one form is not
identical to a base in the other form, then it is important to compute
the gcd of $b$ with the bases in the other form to check for coprimeness
and act appropriately if any of these gcds is not 1. It is also important
to check for identical exponents as well as identical bases. Therefore
multiplication is slower and not as easy to implement as for primal
forms.

There are also proper, tight balanced, and loose balanced variants
analogous to those based on prime radicands. For example,
\begin{itemize}
\item Input \#6 is\textsl{ coprime square-free integer bases distinct proper-exponent}
form. Two such forms are commensurate for addition if and only if
their irrational parts are identical, making this variant the best
choice of internal form for this class. 
\item Input \#7 is \textsl{coprime square-free integer bases distinct tight
balanced-exponent} form.
\item Input \#8 is \textsl{coprime square-free integer bases distinct loose
balanced-exponent} form.
\end{itemize}
As with primal forms, the proper variant is most efficient for adding
absurd numbers because syntactic comparison is sufficient to decide
similarity.

We can also combine factors whose exponents differ only in sign, for
further sharing of common exponents, giving forms that are more concise
and faster for subsequent floating-point approximation:
\begin{defn*}
\textsl{coprime square-free rational bases distinct proper-exponent
}form is a rational number times a product of coprime square-free
positive rational numbers raised to distinct proper exponents, or
{\small $-1$} times that, or 0.\end{defn*}
\begin{rem*}
For example, input \#9 is that form, obtained by combining a numerator
and denominator factor of input \#5.\end{rem*}
\begin{defn*}
\textsl{coprime square-free rational bases distinct tight balanced
exponent }form is a product of coprime square-free positive rational
numbers raised to distinct exponents in the interval $(-1/2,1/2]$,
or {\small $-1$} times that, or 0.\end{defn*}
\begin{rem*}
For example, input \#10 is that form, obtained by combining a numerator
and denominator factor of input \#7.
\end{rem*}
Unfortunately there is no known way to square-free factor an integer
faster than by factoring it then combining bases having identical
exponents. For example, to square-free factor 2910600, we can factor
it into $2^{3}\,3^{3}\,5^{2}\,7^{2}11$, then combine factors having
the same exponent to produce $6^{3}\,35^{2}\,11$ in which 6, 35 and
11 are square free. However, if we are incurring the cost of integer
factorization anyway, it is simpler and probably faster to use full
factorization for the internal form. Consequently although these coprime
square-free distinct-exponent forms are often the most concise \textsl{display}
forms, we do not recommend them as an \textsl{internal} form.

\subsection{Forms based on perfect power computation}

Our next family of canonical forms uses perfect-power factorization
to avoid the cost of integer factorization.
\begin{defn*}
For integers $k>1$, $m>1$ and $n>1$, $n$ is a $k$\textsuperscript{th}
\textsl{perfect power} of $m$ if and only if $m^{k}=n$.
\end{defn*}
We are interested in determining the maximum $\hat{k}$ and minimum
integer $\check{m}$ for which $\check{m}^{\hat{k}}=n$.

Perfect powers can be determined from a prime factorization, but there
is a much faster way: As described by Fitch \cite{FitchNewtonNthRootOfInteger},
given integers $n>1$ and $k>1$, Newton's method with the help of
the floor function can be used to quickly compute an \textsl{exact}
$k$\textsuperscript{th} root of $n$ or determine that one does
not exist. If $n$ has $b$ bits, then starting with a guess that
has $\left\lceil b/k\right\rceil $ bits, the number of correct bits
of the result doubles from 1 or more with each iteration, which is
fast. Since we want the largest such $k$, we can try successive primes
starting with 2, repeating each prime until it no longer works. Each
success substantially reduces $m$ from its initial value of $n$,
reducing the work for subsequent trials. We can stop when for the
next prime $p$ and the current value of $m$, $2^{p}>m$. Here are
some extreme examples for large \textsl{$n$}, ordered from least
to most applications of Newton's method:
\begin{enumerate}
\item For expanded $n=6^{2\cdot3\cdot5\cdot7}$, it requires one successful
and one unsuccessful application of Newton's method with each of the
4 successive primes 2, 3, 5 and 7 to determine that the 164 digit
$n=6^{210}$.
\item For expanded $n=2^{2^{9}}$ it requires 9 successful applications
of Newton's method with the first-tried prime 2 to determine that
the 155 digit $n=2^{512}$.
\item For expanded $n=m^{2}$ with $m$ being the largest prime less than
$2^{256}$, it requires 1 successful application of Newton's method
followed by 53 unsuccessful applications to determine that the 154
digit $n=m^{2}$.
\item For expanded $n=2^{509}$, it requires 96 unsuccessful applications
with successive primes followed by one successful application with
the prime 509 to determine that the 154 digit $n=2^{509}$.
\item For $n$ being the largest prime less than $2^{512}$, it requires
97 unsuccessful applications with successive primes to determine that
the 154 digit $n$ is not a perfect power.%
\footnote{It might be worth using a primality test in perfect power factorization.%
}
\end{enumerate}
For both the mean case and worst case this method is much faster than
factoring large integers.
\begin{defn*}
A positive rational number is a \textsl{perfect} $k$\textsuperscript{th}
\textsl{power} if it is a perfect $k$\textsuperscript{th} power
of an integer, or the reciprocal of such a perfect power, or its numerator
is a perfect $j$\textsuperscript{th} power, its denominator is a
perfect $\ell$\textsuperscript{th} power, and $k$ evenly divides
$\gcd(j,\ell)$.\emph{\vspace{-0.8em}
}
\end{defn*}
\,
\begin{defn*}
A positive rational number is an \textsl{imperfect power} if it is
not a perfect power.
\end{defn*}
We are interested in determining the maximum $\hat{k}$ and minimum
rational number $\check{r}$ for which $\check{r}^{\hat{k}}=r>1$.
For a reduced positive fraction $r$ that is neither an integer nor
a reciprocal, we can compute the $\hat{k}_{1}$ for whichever of the
numerator and denominator is smaller, then if $\hat{k}_{1}>1$, restrict
the choice of primes for applying Newton's method to the other part
to primes that exactly divide $\hat{k}_{1}$.

\subsubsection{A positive rational power of a positive rational number that is an
imperfect power\emph{\vspace{0.3em}
}}
\begin{defn*}
\textsl{Single rational imperfect power base positive exponent} form
is a positive rational power of a positive rational number that is
an imperfect power, or {\small $-1$} times that, or 0.\end{defn*}
\begin{rem*}
For example, input \#11 in Table \ref{Flo:AlternateAbSurds} has this
form.
\end{rem*}
To transform a positive rational power $\alpha$ of a positive rational
number $r$ to this form, maximally perfect-power factor $r\rightarrow\check{r}^{\hat{k}}$,
then return $\check{r}^{\hat{k}\alpha}$.
\begin{prop}
\textsl{Single rational imperfect power base positive exponent form}
is canonical.\end{prop}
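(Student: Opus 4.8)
The plan is to prove that the form is canonical, i.e.\ that no absurd number admits two distinct representations of this shape. Since $s^{\beta}>0$ for every positive rational $s$, both the leading sign and the value $0$ are already determined by the number itself, so it suffices to show: if $s_{1},s_{2}$ are positive rational imperfect powers and $\beta_{1},\beta_{2}$ are positive rationals with $s_{1}^{\beta_{1}}=s_{2}^{\beta_{2}}$, then $s_{1}=s_{2}$ and $\beta_{1}=\beta_{2}$. I would also record, as a one-line preliminary, that the stated algorithm really does land in the form: if $r=\check{r}^{\,\hat{k}}$ is the maximal perfect-power factorization, then $\check{r}$ is itself imperfect, for otherwise $\check{r}=t^{m}$ with $m>1$ would give $r=t^{m\hat{k}}$, contradicting maximality of $\hat{k}$; and $\hat{k}\alpha>0$ since $\hat{k}\ge 1$ and $\alpha>0$.

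First I would pass to prime factorizations. Write $s_{i}=\prod_{p}p^{\,e_{i,p}}$ with each $e_{i,p}\in\mathbb{Z}$, only finitely many nonzero, and (since $s_i\neq 1$) not all zero. By the discussion preceding the proposition, a positive rational other than $1$ is an imperfect power precisely when its maximal perfect-power exponent $\hat{k}$ equals $1$; and taking for the numerator and denominator the gcds $j$ and $\ell$ of their prime exponents, one has $\hat{k}=\gcd(j,\ell)=\gcd\{\,|e_{i,p}| : e_{i,p}\neq 0\,\}$. Hence each $s_{i}$ satisfies $\gcd\{\,|e_{i,p}| : e_{i,p}\neq 0\,\}=1$. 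Now $s_{i}^{\beta_{i}}=\prod_{p}p^{\,\beta_{i}e_{i,p}}$, which after discarding the zero exponents and reducing the remaining rational exponents is a pure primal form for the common value. By the earlier proposition that pure primal form is canonical, the two pure primal forms coincide, so $\beta_{1}e_{1,p}=\beta_{2}e_{2,p}$ for every prime $p$.

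Then I would finish with a short coprimality argument. The integer vectors $\mathbf{e}_{1}=(e_{1,p})_{p}$ and $\mathbf{e}_{2}=(e_{2,p})_{p}$ satisfy $\beta_{1}\mathbf{e}_{1}=\beta_{2}\mathbf{e}_{2}$, so $\mathbf{e}_{2}=(a/b)\,\mathbf{e}_{1}$ where $a/b=\beta_{1}/\beta_{2}$ in lowest terms with $a,b$ coprime positive integers, i.e.\ $b\,\mathbf{e}_{2}=a\,\mathbf{e}_{1}$. Since all entries are integers and $\gcd(a,b)=1$, $a$ divides $e_{2,p}$ for every $p$, hence $a$ divides $\gcd\{|e_{2,p}|\}=1$, so $a=1$; symmetrically $b$ divides $e_{1,p}$ for every $p$, so $b=1$. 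Therefore $\beta_{1}=\beta_{2}$ and $\mathbf{e}_{1}=\mathbf{e}_{2}$, that is $s_{1}=s_{2}$, which is exactly the required uniqueness.

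I expect no deep obstacle; the two friction points are (i) the bookkeeping that identifies ``imperfect power'' with ``the gcd of the exponents in the prime factorization equals $1$,'' which is immediate from the definition of perfect power given above, and (ii) the harmless degenerate value $1=1^{\beta}$, which is not represented uniquely by the letter of the definition and for which one simply fixes the convention of writing $1$ without an exponent. Apart from these, everything reduces to the already-established canonicality of pure primal form together with the single-prime coprimality observation above, so the argument is essentially a translation of the fact that the positive rationals form a free abelian group on the primes.
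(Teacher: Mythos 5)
Your proof is correct, but it takes a genuinely different route from the paper's. The paper argues constructively from the already-canonical pure primal form $p_{1}^{\alpha_{1}}p_{2}^{\alpha_{2}}\cdots$: it sets $\gamma=\gcd(\alpha_{1},\alpha_{2},\ldots)$, rewrites the number as $r^{\gamma}$ with $r=p_{1}^{n_{1}}p_{2}^{n_{2}}\cdots$ and coprime integer exponents, perfect-power factors $r\rightarrow\check{r}^{\hat{k}}$, and infers canonicality from the uniqueness of all these intermediate quantities plus the remark that converting back (factor $\check{r}$, distribute $\hat{k}\gamma$, sort) recovers the pure primal form. You instead prove injectivity of the representation directly: you translate ``imperfect power'' into ``the gcd of the magnitudes of the nonzero prime exponents equals $1$,'' equate the pure primal forms of $s_{1}^{\beta_{1}}=s_{2}^{\beta_{2}}$ to obtain $\beta_{1}\mathbf{e}_{1}=\beta_{2}\mathbf{e}_{2}$, and then use the lowest-terms coprimality argument to force $\beta_{1}=\beta_{2}$ and $\mathbf{e}_{1}=\mathbf{e}_{2}$. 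Both arguments ultimately rest on the earlier proposition that pure primal form is canonical (the positive rationals are free abelian on the primes). What your version buys: it is a self-contained uniqueness proof that does not depend on tracking the algorithm's intermediates \textbf{--} in particular it supplies the step the paper leaves implicit, that an \emph{arbitrary} representation in the target form, not merely the one the algorithm produces, must coincide with the algorithm's output \textbf{--} and it explicitly flags the degenerate base $1$, which the paper's definition of perfect power technically leaves unhandled. What the paper's version buys: it simultaneously validates the stated conversion procedure (via $\gamma$, the coprime integer exponents, and Newton's method), which is the constructive content the surrounding section relies on.
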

\begin{proof}
To convert pure primal representation $P=p_{1}^{\alpha_{1}}p_{2}^{\alpha_{2}}\cdots$
to this form, let
\begin{eqnarray*}
\gamma & \leftarrow & \gcd(\alpha_{1},\alpha_{2},\ldots),
\end{eqnarray*}
which is positive, then let $n_{1}\leftarrow\alpha_{1}/\gamma$, $n_{2}\leftarrow\alpha_{2}/\gamma$,
etc. All of the $n_{j}$ are coprime \textsl{integers}. Consequently
$P$ can be represented as $r^{\gamma}$ where $r$ is the expanded
rational number $p_{1}^{n_{1}}p_{2}^{n_{2}}\cdots$. Then use Newton's
method to express $r$ as an imperfect power base raised to a positive
exponent: $r\rightarrow\check{r}^{\hat{k}}$ giving $P\rightarrow\check{r}^{\hat{k}\gamma}$.
The pure primal representation together with $\gamma$, $n_{1}$,
$n_{2}$, $r$, $\check{r}$ and $\hat{k}$ are all unique, therefore
this single power form for $P$ is unique. Now consider the other
direction: Factor $\check{r}$, distribute $\hat{k}\gamma$, then
sort the factors into canonical order, giving the canonical pure primal
form.
\end{proof}
To \textsl{multiply} two such forms $u_{1}r_{1}^{\alpha_{1}}$ and
$u_{2}r_{2}^{\alpha_{2}}$ with $u_{1},u_{2}\in\{1,-1\}$:
\begin{enumerate}
\item Let $\gamma\leftarrow\gcd(\alpha_{1},\alpha_{2})$, $n_{1}\leftarrow\alpha_{1}/\gamma$,
$n_{2}\leftarrow\alpha_{2}/\gamma$, making $n_{1}$ and $n_{2}$
integer
\item Use Newton's method to compute $r_{1}^{n_{1}}r_{2}^{n_{2}}\rightarrow\check{r}^{\hat{k}}$,
then return $u_{1}u_{2}\left(\check{r}\right)^{\hat{k}\gamma}$.
\end{enumerate}
To \textsl{add} two such forms:
\begin{enumerate}
\item Let $m_{1}\leftarrow\left\lfloor \alpha_{1}\right\rfloor $, $m_{2}\leftarrow\left\lfloor \alpha_{2}\right\rfloor $,
$\beta_{1}\leftarrow\alpha_{1}-m_{1}$, $\beta_{2}\leftarrow\alpha_{2}-m_{2}$.
\item If $\beta_{1}\neq\beta_{2}$, then the ratio of the two inputs is
irrational, so their sum cannot be represented as a single absurd
number.
\item Otherwise, let $g\leftarrow\gcd(r_{1},r_{2})$, $\bar{r}_{1}\leftarrow r_{1}/g$,
$\bar{r}_{2}\leftarrow r_{2}/g$, $n\leftarrow\mathrm{numerator}(\beta_{1})$
, $d\leftarrow\mathrm{denominator}(\beta_{1})$.
\item If $\bar{r}_{1}\not\rightarrow\breve{r}_{1}^{d}$ or $\bar{r}_{2}\not\rightarrow\breve{r}_{2}^{d}$
then the sum cannot be represented as a single absurd number.
\item Otherwise let $\rho\leftarrow u_{1}r_{1}^{m_{1}}\breve{r}_{1}^{n}+u_{2}r_{2}^{m_{2}}\breve{r}_{2}^{n}$.
\item If $\rho=0$ then return 0.
\item Use the multiplication algorithm to return the result of $\rho g^{\gamma}$
as a positive rational power of a positive rational number that is
an imperfect power \textbf{--} or as {\small $-1$} times that.
\end{enumerate}
This is a canonical form that avoids the cost of integer factorization!

However, for strict consistency, even a rational number might have
to be represented as a perfect power such as $256/81\rightarrow(4/3)^{4}$.
The frequent use of Newton's method to maintain this would unacceptably
slow down rational arithmetic. Therefore in practice whenever a resulting
exponent is an integer, then it is better to expand the power to the
more typical representation for a rational number.

Although fractional exponents are often merely half-integer or thirds
of an integer, this form can result in large radicands. For example,
\begin{equation}
\dfrac{29^{31/10}}{2^{1/10}}\:\rightarrow\:\left(\dfrac{29^{31}}{2}\right)^{1/10}\:\rightarrow\:\left(\dfrac{2159424054808578564166497528588784562372597429}{2}\right)^{1/10}.\label{eq:NonPerfectPowRadicandGrowth}
\end{equation}

Extracting a maximal rational factor from the pure primal form made
it much faster to add absurd numbers, so it is natural to wonder if
the same is true for single rational imperfect power base positive
exponent form. For example,
\[
\left(\dfrac{576}{25}\right)^{2/3}\:\rightarrow\:\left(\dfrac{24}{5}\right)^{4/3}\:\rightarrow\:\dfrac{24}{5}\left(\dfrac{24}{5}\right)^{1/3}.
\]

Unfortunately, this is not a canonical form, because $24=2^{3}3$,
so this number also has a different representation in this form as
\begin{equation}
\dfrac{48}{5}\left(\dfrac{3}{5}\right)^{1/3},\label{eq:nonCanonicalRatTimesSinglePower}
\end{equation}
and it requires square-free integer factoring to obtain this form,
which requires integer factoring. One way to make it canonical is:
Whenever an input or a tentative result is a rational number times
a fractional power of a rational number:
\begin{enumerate}
\item Transform the product to single rational imperfect power base positive
exponent form.
\item Then use the floor function to extract a rational factor if the positive
fractional power exceeds 1, making the fractional power proper.
\end{enumerate}
For example,
\[
\dfrac{48}{5}\left(\dfrac{3}{5}\right)^{1/3}\:\rightarrow\:\left(\dfrac{48^{3}\,3}{5^{3}\,5}\right)^{1/3}\:\rightarrow\:\left(\left(\dfrac{24}{5}\right)^{4}\right)^{1/3}\:\rightarrow\:\left(\dfrac{24}{5}\right)^{4/3}\:\rightarrow\:\dfrac{24}{5}\left(\dfrac{24}{5}\right)^{1/3}.
\]
 Gcds can be used to simplify sums of absurd numbers in this form
without doing this canonicalizing reabsorption. For example, $\gcd\,(24/5,\,3/5)\rightarrow3/5$,
so
\[
\dfrac{24}{5}\left(\dfrac{24}{5}\right)^{1/3}\!\!-\dfrac{48}{5}\left(\dfrac{3}{5}\right)^{1/3}\:\rightarrow\:\,\,\dfrac{24}{5}8^{1/3}\left(\dfrac{3}{5}\right)^{1/3}\!\!-\dfrac{48}{5}\left(\dfrac{3}{5}\right)^{1/3}\:\rightarrow\:\,\,\dfrac{48}{5}\left(\dfrac{3}{5}\right)^{1/3}\!\!-\dfrac{48}{5}\left(\dfrac{3}{5}\right)^{1/3}\:\rightarrow\:\,\,0.
\]
However, the loss of canonicality for irrational absurd numbers is
still troublesome. For example, for any function $f$ \textbf{--}
including a generic one with no current definition \textbf{--} we
would like
\begin{eqnarray*}
f\left(\dfrac{24}{5}\left(\dfrac{24}{5}\right)^{1/3}\right)-f\left(\dfrac{48}{5}\left(\dfrac{3}{5}\right)^{1/3}\right) & \rightarrow & 0.
\end{eqnarray*}
However, that will not happen with this non-canonical form unless
every time a subexpression of the form $f(u)-f(v)$ is encountered
during all transformations we check to see if $u-v$ can be simplified
to 0. This is time consuming, a programming nuisance, and unlikely
to enjoy 100\% programmer compliance.

A way to partially overcome this dilemma is to use the proper variant
only temporarily during a sequence operations with irrational absurd
numbers, then represent the result of this sequence as a rational
number if it is one, or as a unit times a single power otherwise.
However, such context-dependent departure from pure locally self-contained
bottom-up simplification is extra programming work, hence an invitation
to inconsistent behavior.

In any event, it is definitely worthwhile overall to represent rational
absurd numbers as rational numbers.

\subsubsection{A ratio of positive rational powers of positive integers that are
imperfect powers\label{sub:RatioOf2NonPerfectPowers}\emph{\vspace{0.3em}
}}
\begin{defn*}
\textsl{Ratio of two imperfect power integer bases raised to positive
exponents} form is a ratio of two positive rational powers of positive
integers that are imperfect powers, or {\small $-1$} times that,
or 0.
\end{defn*}
Input \#12 in Table \ref{Flo:AlternateAbSurds} has this form. For
this example the resulting two exponents are identical, but that might
not be so. For example, this form can help reduce or avoid radicand
growth by transforming the right side of (\ref{eq:NonPerfectPowRadicandGrowth})
to the left side.

\subsubsection{Maximal positive reciprocal power of a positive rational number form\emph{\vspace{0.3em}
}}
\begin{defn*}
\textsl{Maximal positive reciprocal-exponent form} is the largest
possible positive reciprocal power of a positive rational number,
or {\small $-1$} times that, or 0.\end{defn*}
\begin{rem*}
For example, input \#13 in Table \ref{Flo:AlternateAbSurds} has this
form. As another example, $(9/4)^{1/4}$ does not have this form,
but the equivalent expression $(3/2)^{1/2}$ does. Although $8/27$
is a perfect cube, $(8/27)^{1/2}$ has this form because the exponent
in $(2/3)^{3/2}$ is not a reciprocal.
\end{rem*}
To convert a positive reduced fractional power of a positive reduced
rational number $r^{n/d}$ to this form:
\begin{enumerate}
\item Use Newton's method to find $\bar{d}$, the largest divisor of $d$
such that $r\rightarrow\bar{r}^{\bar{d}}$.
\item Expand $\bar{r}^{n}$ giving $\hat{r}$.
\item Return the form $\hat{r}^{\bar{d}/d}$. Note that this generally requires
fewer applications of Newton's method than to determine the \textsl{maximal}
perfect root of $r$.\end{enumerate}
\begin{prop}
A maximal positive reciprocal power of a positive rational number
is canonical.\end{prop}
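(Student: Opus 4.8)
The plan is to reduce the claim to the already-established canonicality of pure primal form, by exhibiting a sign- and zero-preserving bijection between pure primal forms and maximal positive reciprocal-exponent forms. After disposing of $0$ and of the overall sign (a positive absurd number is never equal to a negative one, and $s^{1/d}$ with $s>0$ is positive, so the sign is an invariant), it suffices to treat a positive absurd number $a$ with pure primal form $a=\prod_{j=1}^{k}p_{j}^{\beta_{j}}$, the $\beta_{j}$ nonzero reduced rationals and the $p_{j}$ increasing primes. Let $d^{*}$ be the least common multiple of the denominators of the $\beta_{j}$ (with $d^{*}=1$ when $k=0$). Then each $\beta_{j}d^{*}$ is an integer, so $s:=\prod_{j}p_{j}^{\beta_{j}d^{*}}$ is a positive rational and $s^{1/d^{*}}=a$ (positive real roots throughout). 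I would take this as the forward map $\Phi$; its output $a=s^{1/d^{*}}$ is a positive reciprocal power of a positive rational number.

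The crux is that $1/d^{*}$ is the \emph{largest} reciprocal exponent, equivalently that $d^{*}$ is the least $d$ for which $a$ can be written as $t^{1/d}$ with $t\in\mathbb{Q}_{>0}$. Suppose $a=t^{1/d}$, so $t=a^{d}=\prod_{j}p_{j}^{\beta_{j}d}$. Since $t$ is rational it has a unique factorization into integer powers of primes; comparing that factorization with $\prod_{j}p_{j}^{\beta_{j}d}$ forces $\beta_{j}d\in\mathbb{Z}$ for every $j$. (This comparison is exactly the statement that the logarithms of distinct primes are $\mathbb{Q}$-linearly independent, which one gets cheaply from the fundamental theorem of arithmetic: a rational multiplicative relation among primes clears denominators to an integer one $\prod p_{j}^{n_{j}}=1$, and separating the positive from the negative exponents equates two coprime positive integers, forcing all $n_{j}=0$.) Hence $d$ is a common multiple of the denominators of the $\beta_{j}$, so $d^{*}\mid d$ and $1/d\le 1/d^{*}$; and once $d=d^{*}$ is pinned down, $s=a^{d^{*}}$ is forced. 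Thus the maximal positive reciprocal-exponent representation of $a$ is unique, and it is the one produced by $\Phi$.

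For the converse map $\Psi$, start from any expression in the form, $a=s^{1/d}$ with $s\in\mathbb{Q}_{>0}$ and $d$ the minimal such exponent's denominator; factor $s=\prod_{j}p_{j}^{e_{j}}$ with $e_{j}\in\mathbb{Z}$, distribute the exponent to get $a=\prod_{j}p_{j}^{e_{j}/d}$, reduce each $e_{j}/d$, discard zero exponents, and order the bases by increasing prime --- this is the canonical pure primal form. It then remains to check $\Psi\circ\Phi=\mathrm{id}$ and $\Phi\circ\Psi=\mathrm{id}$. The first is immediate: factoring and redistributing $s=\prod p_{j}^{\beta_{j}d^{*}}$ returns $\prod p_{j}^{\beta_{j}}$ in the same reduced, sorted shape. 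For the second, $\Psi$ of $s^{1/d}$ yields the pure primal form of the number $a$, and $\Phi$ then rebuilds it with the least clearing exponent $d^{*}$; by the crux step $d^{*}$ equals the minimal $d$ that defines the given form, so $d^{*}=d$ and $s=a^{d^{*}}$, recovering the original expression. Canonicality follows. I expect the only genuine obstacle to be the prime-power comparison invoked in the crux step; I would also flag one cautionary point --- the base $s$ need \emph{not} be an imperfect power, and one must resist ``simplifying'' a perfect-power base, since rewriting e.g. $(8/27)^{1/2}$ as $(2/3)^{3/2}$ destroys the reciprocal exponent and hence the form.
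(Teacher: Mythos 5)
Your proof is correct and follows essentially the same route as the paper's: both reduce the claim to the already-established canonicality of pure primal form via explicit conversions in the two directions, and your exponent $1/d^{*}$ (with $d^{*}$ the least common multiple of the exponent denominators) coincides with the paper's $1/\mathrm{denominator}(\gamma)$, since the gcd $\gamma$ of the reduced exponents has denominator equal to exactly that lcm, so the two constructions produce the same base and exponent. Where you go beyond the paper is your ``crux'' step: the paper builds $r^{1/\mathrm{denominator}(\gamma)}$ from the pure primal form and infers uniqueness from the uniqueness of $\gamma$, the $m_{j}$ and $r$, but it never explicitly verifies that this reciprocal exponent is in fact the \emph{largest} one admitting a rational base, which is part of the definition of the form, nor that an arbitrary expression already in the form round-trips to itself. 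Your lemma that $a=t^{1/d}$ with $t\in\mathbb{Q}_{>0}$ forces every $\beta_{j}d$ to be an integer (via unique factorization, i.e. the $\mathbb{Q}$-linear independence of logarithms of distinct primes), hence $d^{*}\mid d$ and then $s=a^{d^{*}}$ is forced, supplies precisely that missing verification and makes $\Phi\circ\Psi=\mathrm{id}$ honest. So your write-up is a somewhat more complete rendering of the same argument rather than a different one; your closing caution that the base need not be an imperfect power agrees with the paper's own remark about $(8/27)^{1/2}$ versus $(2/3)^{3/2}$.
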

\begin{proof}
To convert pure primal representation $P=p_{1}^{\alpha_{1}}p_{2}^{\alpha_{2}}\cdots$
to this form, let
\begin{eqnarray*}
\gamma & \leftarrow & \gcd(\alpha_{1},\alpha_{2},\ldots),
\end{eqnarray*}
which is positive, then $m_{1}\leftarrow\alpha_{1}/\gamma$, $m_{2}\leftarrow\alpha_{2}/\gamma$,
etc. (The gcd of two fractions is the gcd of their numerators divided
by the least common multiple of their denominators. Therefore all
of the multiplicities $m_{j}$ are coprime \textsl{integers}.) Consequently
$P$ can be represented as $r^{1/\mathrm{denominator}(\gamma)}$ where
$r$ is the expanded rational number $\left(p_{1}^{m_{1}}p_{2}^{m_{2}}\cdots\right)^{\mathrm{numerator}(\gamma)}$.
The pure primal representation together with $\gamma$, $m_{1}$,
$m_{2}$ and $r$ are all unique. Therefore this single power form
for $P$ is unique. Now consider the other direction: Factor $r$,
distribute $1/\mathrm{denominator}(\gamma)$, then sort the factors
into canonical order, giving the canonical pure primal form.
\end{proof}
As illustrated by comparing inputs \#11 and \#13 in Table \ref{Flo:AlternateAbSurds},
this form can be less concise than imperfect power form. However,
the arithmetic is faster:

To \textsl{multiply} two such forms $u_{1}r_{1}^{\alpha_{1}}$ and
$u_{2}r_{2}^{\alpha_{2}}$ with $u_{1},u_{2}\in\{1,-1\}$:
\begin{enumerate}
\item Let $\gamma\leftarrow\gcd(\alpha_{1},\alpha_{2})$, which is a reciprocal
because $\alpha_{1}$, and $\alpha_{2}$ are both reciprocals.
\item Let $m_{1}\leftarrow\alpha_{1}/\gamma$, $m_{2}\leftarrow\alpha_{2}/\gamma$,
making $m_{1}$ and $m_{2}$ integer.
\item Expand $u_{1}u_{2}$ giving $u$ and expand $r_{1}^{m_{1}}r_{2}^{m_{2}}$
giving $r$.
\item Convert $r^{\gamma}\rightarrow\hat{r}^{1/\check{d}}$ by the algorithm
at the beginning of this sub-subsection.
\item Return $u\hat{r}^{1/\check{d}}$.
\end{enumerate}
To \textsl{add} two such forms:
\begin{enumerate}
\item If $\alpha_{1}\neq\alpha_{2}$, then the sum cannot be represented
as a single absurd number.
\item Otherwise, let $g\leftarrow\gcd(r_{1},r_{2})$, $n_{1}\leftarrow r_{1}/g$,
$n_{2}\leftarrow r_{2}/g$, $d\leftarrow\mathrm{denominator}(\alpha_{1})$.
\item If $n_{1}\not\rightarrow\bar{n}_{1}^{d}$ or $n_{2}\not\rightarrow\bar{n}_{2}^{d}$,
then the sum cannot be represented as a single absurd number. (These
perfect root computations require only one or two applications of
Newton's method for one specific $d$, making them faster than determining
the maximal perfect powers.)
\item Otherwise let $\rho\leftarrow u_{1}\bar{n}_{1}+u_{2}\bar{n}_{2}$.
\item If $\rho=0$ then return 0.
\item Otherwise use the multiplication algorithm to transform $\rho g^{\gamma}$
into a maximal reciprocal power of a positive rational number \textbf{--}
or {\small $-1$} times that.
\end{enumerate}
This is another canonical form that avoids the cost of integer factorization,
and requires fewer applications of Newton's method than imperfect
power form. Moreover, rational numbers are a special case wherein
$\gamma$ is the reciprocal of 1. However, the radicand can become
quite large if the numerator of the given exponent is large. For example,
\[
29^{31/10}\:\rightarrow\:\left(29^{31}\right)^{1/10}\:\rightarrow\:2159424054808578564166497528588784562372597429^{1/10}.
\]

\subsubsection{A ratio of two maximal reciprocal powers of positive integers}
\begin{defn*}
\textsl{Ratio of two maximal reciprocal powers of positive integers}
form is a ratio of two maximally positive reciprocal powers of positive
integers, or {\small $-1$} times that, or 0.
\end{defn*}
Input \#14 in Table \ref{Flo:AlternateAbSurds} has this form. For
this example the resulting two exponents are identical, but that need
not be so. For example,
\[
\dfrac{2^{3/4}\,7^{1/4}}{3^{2/3}5^{1/3}}\:\rightarrow\:\dfrac{\left(8\times7\right)^{1/4}}{\left(9\times5\right)^{1/3}}\:\rightarrow\:\dfrac{56^{1/4}}{45^{1/3}}.
\]
In contrast, the radicand for unification into a single fractional
power can be significantly larger:
\[
\dfrac{56^{1/4}}{45^{1/3}}\:\rightarrow\:\dfrac{56^{3/12}}{45^{4/12}}\:\rightarrow\:\dfrac{175616^{1/12}}{4100625^{1/12}}\:\rightarrow\:\left(\dfrac{175616}{4100625}\right)^{1/12}.
\]

However, arithmetic using separate single numerator and denominator
radicals is more complicated, because we must use gcds to insure that
numerator radicands and denominator radicands are relatively prime
and contend with possibly different exponents of their gcd if it is
not 1.

\subsection{Proper power of an integer forms}

People often like to have absurd numbers displayed as a rational number
times \textsl{one} rationalized proper fractional power of the smallest
possible positive integer because:
\begin{itemize}
\item It is proper.
\item The denominator is rationalized, which students are taught to overvalue.
\item It has only one fractional power and the radicand is an integer.
\item The maximum possible amount of rational coefficient is factored out,
so the one radicand is as simple as possible for such a form.\end{itemize}
\begin{defn*}
\textsl{Single minimal integer base raised to a proper exponent} form
is a rational number or a rational number times the smallest possible
positive integer raised to an exponent in the interval $(0,1)$.

Input \#15 in Table \ref{Flo:AlternateAbSurds} has this form. Proper
exponent primal form can be converted to this form by unifying all
its fractional powers into one. For example,
\[
\dfrac{2}{15}2^{1/3}3^{1/3}5^{1/3}7^{2/3}\:\rightarrow\:\frac{2}{15}\left(2\times3\times5\times7^{2}\right)^{1/3}\:\rightarrow\:\dfrac{2}{15}1470^{1/3}.
\]

\end{defn*}
Unfortunately we do know how to guarantee this form canonical without
integer factorization. However, the following form is similar and
avoids integer factorization, but can result in larger integer radicands:
\begin{defn*}
\textsl{Single integer imperfect power base} form is derived from
the ratio of two imperfect power integer bases with positive exponents
form as follows:\end{defn*}
\begin{enumerate}
\item Extract a rational factor by independently making the numerator and
denominator exponents proper,
\item Rationalize the denominator.
\item Unify the resulting two numerator fractional powers into a single
fractional power of an integer.
\end{enumerate}
For example, starting with input \#12,
\[
\dfrac{28^{2/3}}{15^{2/3}}\;\rightarrow\;\dfrac{28^{2/3}\,15^{1/3}}{15^{2/3}\,15^{1/3}}\;\rightarrow\;\dfrac{(784\times15)^{1/3}}{15}\;\rightarrow\;\dfrac{11760^{1/3}}{15}
\]
giving input \#16 in Table \ref{Flo:AlternateAbSurds}. This is not
as nice as input \#16, but depending on taste and the application,
it is arguably nicer than inputs \#11 through \#13, which are the
only other listed ones based only on imperfect powers with no need
for integer factorization.

Although it is a good default display form for implementations that
totally avoid integer factorization, this form is not very convenient
as an internal form.

\subsection{A hybrid of proper-exponent primal and imperfect power forms}
\begin{itemize}
\item Primal internal forms entail integer factorization time that is unacceptable
to many users if the second largest prime factor is larger than about,
say, 30 digits. However, these forms are the only points of departure
for generating many display forms that users might value. Of the various
primal forms, arithmetic is the fastest and easiest to implement for
proper-exponent primal form, with tight balanced primal form being
a close second.
\item coprime square-free distinct exponent internal forms also occasionally
entail unacceptable integer factorization time, but they tend to require
less display space than primal forms. However, coprime forms are a
helpful point of departure for fewer display forms, and the arithmetic
is somewhat slower and harder to implement.
\item Internal forms based on imperfect power factorization do not entail
integer factorization, but the radicands can become large. Moreover,
although input \#11 in Table \ref{Flo:AlternateAbSurds} is among
the most concise for this particular absurd number, this display form
would not be liked by many users on some other examples, and these
internal forms are not helpful as a point of departure for most of
the display forms that users might want. 
\end{itemize}
Thus, the advantages of proper primal internal form subsume those
of coprime square-free distinct exponent internal forms, and in comparison
the disadvantages of canonical forms based solely on perfect power
factorization do not make up for its better worst-case computing time.
These considerations suggest the following hybrid internal representation
and algorithmic ideas:
\begin{enumerate}
\item Use radicand factorization and proper-exponent primal form up through
some prime $\hat{p}$. Composite integer factors exceeding $\hat{p}^{2}$
are merely perfect-power factored, and the exponents are made proper.
Let us call any consequent integer radicand exceeding $\hat{p}^{2}$
a \textsl{quasi-prime}. Tight balanced exponents could be used instead
of proper exponents.
\item Treat the quasi-prime factors the same as prime factors, except compute
gcds between each new quasi prime and any other quasi primes and a
rational denominator and/or numerator whose magnitude exceeds $\hat{p}$.
Any resulting non-trivial gcd splits the radicand or radicands and
might enable extracting more rational coefficient. This process is
illustrated in computations (\ref{eq:GcdsCanHelp}) through (\ref{eq:UseGCDSTep4}).
\item The resulting form is not necessarily canonical if it contains a quasi-prime
radicand, which will be rather infrequent if $\hat{p}$ is set rather
large. However:

\begin{enumerate}
\item The radicands are always coprime to each other and the numerator and
denominator of any rational factor.
\item Addition, multiplication and rational powers of absurd numbers always
yield a single absurd number if the result can be so represented,
and therefore 0 is always recognized in such results.
\end{enumerate}
\end{enumerate}

\section{Conclusions\label{sec:Conclusion}}

Some major computer algebra systems currently produce erroneous results
that could be prevented by transforming absurd numbers to a canonical
internal form. There are reasonably efficient canonical forms that
avoid the potential cost of factoring large integers, and they are
not difficult to implement. However the radicands can become large,
the arithmetic is slower when there are no large prime factors, and
these forms are not good points of departure for popular display forms.

Thus for an \textsl{internal} representation we recommend using the
proper or tight balanced exponent primal form up to some particular
prime base $\hat{p}$, beyond which only perfect power factorization
is used, together with gcds to assure 0-recognition and that all factors
are coprime.

We think the default display form should be concise. No one form will
be the most concise for all examples, but the most concise will often
be in the set:
\begin{enumerate}
\item a rational number times one minimal integer raised to a proper exponent,
as exemplified by input \#15;
\item coprime square-free bases raised to distinct exponents \textbf{--}
perhaps times a rational number \textbf{--} as exemplified by inputs
\#5 through \#10.
\end{enumerate}
A system could compute all these forms and perhaps others, and then
display the most concise one. However, when displaying an expression
containing multiple absurd numbers, consistency in the form used for
those numbers is also important. Also the prime factorization provided
by the pure, loose balanced-exponent, and proper primal forms is particularly
informative. Therefore systems should provide a convenient mechanism
for users to set the default form used to display absurd numbers.
Ideally this default display form setting would be done using a transformation
dialog box such as that shown in Section 2.

\end{document}